\documentclass[runningheads]{llncs}
\usepackage[T1]{fontenc}

\newif\ifanonymous\anonymousfalse
\newif\ifdisablenotes\disablenotesfalse

\usepackage[a4paper,margin=1.3in]{geometry}
\usepackage{tikz,pgfplots}

\pgfplotsset{compat=1.15}
\usetikzlibrary{shapes,arrows,fit,calc,positioning,automata,decorations.pathreplacing,patterns,decorations.pathmorphing,decorations.markings}
\usetikzlibrary{cipher}

\usepackage{amsmath,amssymb,amsfonts}
\usepackage{enumerate,graphicx,url}
\usepackage[colorlinks=true, linkcolor=red, citecolor=blue,linktoc=section]{hyperref}
\usepackage{colortbl}
\usepackage{multirow}
\usepackage{verbatim}
\usepackage{booktabs}
\usepackage{xspace} 
\usepackage{lastpage}
\usepackage{tabu}
\usepackage{makecell}
\usepackage{comment}
\usepackage{csquotes}
\usepackage{mathtools}
\usepackage{xparse}
\usepackage{slashbox,pict2e}
\usepackage{authblk}
\usepackage{subcaption}
\ifdisablenotes%
\usepackage[disable]{todonotes}
\else%
\usepackage{todonotes}
\fi%
\usepackage[figuresright]{rotating}
\usepackage{pifont}
\makeatletter
\renewcommand*\env@matrix[1][*\c@MaxMatrixCols c]{%
	\hskip -\arraycolsep
	\let\@ifnextchar\new@ifnextchar
	\array{#1}}
\makeatother

\usepackage{threeparttable}
\usepackage{caption}
\usepackage{placeins}
\usepackage{siunitx}
\usepackage[capitalize,nameinlink]{cleveref}
\usepackage[skins]{tcolorbox}
\usepackage{orcidlink}

\newtcolorbox{sidebarbox}[2][]{
  enhanced,
  colback=white,            % white background
  colframe=#2,              % sidebar color
  boxrule=0pt,              % no border
  borderline west={3pt}{0pt}{#2}, % colored bar on the left
  sharp corners,
  left=15pt, right=6pt, top=6pt, bottom=6pt,
  fonttitle=\bfseries,
  title=#1
}

\newtcolorbox{myframe}[2][]{%
  enhanced,colback=cyan,colframe=black,coltitle=black,
  sharp corners,boxrule=0.7pt,
  fonttitle=\bf, attach boxed title to top left={yshift=-0.3\baselineskip-0.4pt,xshift=2mm},
  boxed title style={tile,size=minimal,left=0.5mm,right=0.5mm,
    colback=white,before upper=\strut},
  title=#2,#1
}

\newtcolorbox{simpleframe}[2][]{%
  enhanced,colback=white,colframe=black,coltitle=black,
  sharp corners,boxrule=0.7pt,
  fonttitle=\normalsize, attach boxed title to top left={yshift=-0.3\baselineskip-0.4pt,xshift=2mm},
  boxed title style={tile,size=minimal,left=0.5mm,right=0.5mm,
    colback=white,before upper=\strut},
  title=#2,#1
}

\Crefname{algocf}{Algorithm}{Algorithms}
\Crefname{appendix}{Appendix}{Appendix}

\DeclareMathOperator*{\concat}{%
    \mathchoice%
        {\Big\Vert}%
        {\big\Vert}%
        {\Vert}%
        {\Vert}%
}

\DeclarePairedDelimiterX{\rvect}[1]{[}{]}{\,\rowvec{#1}\,}
\ExplSyntaxOn
\NewDocumentCommand{\rowvec}{m}
 {
  \seq_set_split:Nnn \l_tmpa_seq { , } { #1 }
  \begin{pmatrix}
  \seq_use:Nn \l_tmpa_seq { & }
  \end{pmatrix}
 }
\ExplSyntaxOff

\tikzset{XOR/.style={draw,circle,scale=1,append after command={
        [shorten >=\pgflinewidth, shorten <=\pgflinewidth,]
        (\tikzlastnode.north) edge (\tikzlastnode.south)
        (\tikzlastnode.east) edge (\tikzlastnode.west)
        }
    }
}
\tikzset{MUL/.style={draw,circle,scale=1,append after command={
        [shorten >=\pgflinewidth, shorten <=\pgflinewidth,]
        (\tikzlastnode.north west) edge (\tikzlastnode.south east)
        (\tikzlastnode.south west) edge (\tikzlastnode.north east)
        }
    }
}
\tikzset{buswidth/.style={draw=none,circle,scale=1,append after command={
        [shorten >=\pgflinewidth, shorten <=\pgflinewidth,]
        (\tikzlastnode.south west) edge (\tikzlastnode.north east)
        }
    }
}
\tikzset{enc/.style=   {draw=black,fill opacity=.3},
         encdec/.style={draw=black,fill opacity=1}
}
\definecolor{darkorange}{HTML}{FF8C00}
\definecolor{navyblue}{HTML}{000080}
\definecolor{britishracinggreen}{HTML}{004225}
\newcommand{\prob}{\text{Pr}}

\newcommand{\F}{\mathbb{F}}
\newcommand{\Z}{\mathbb{Z}}
\newcommand{\beneswritten}{Bene\v s}

\newcommand{\floor}[1]{\left\lfloor #1 \right\rfloor} % custom floor with both brackets
\newcommand{\ceil}[1]{\left\lceil #1 \right\rceil} % custom ceil with both brackets
\newcommand{\abs}[1]{\lvert #1 \rvert} % custom absolute value symbol

\newcommand{\lm}{\mathsf{LM}}

\newcommand{\bigO}{\mathcal{O}}
\renewcommand{\deg}{\ensuremath{\mathsf{deg}}}

\newcommand{\cico}{\mathsf{CICO}}

\newcommand{\vars}[1]{\mathbf{#1}}

\newcommand{\benes}{\ensuremath{\mathcal{B}}}

\newcommand{\roundfunction}{\mathcal{R}}
\newcommand{\nllayer}{\mathcal{F}}
\newcommand{\llayer}{\mathcal{M}}

\newcommand{\trunc}{\texttt{Trunc}}

\newcommand{\design}{\text{BenX}}

\newcommand{\monolith}{\texttt{Monolith}}

\newcommand{\poseidon}{\textsc{Poseidon}}
\newcommand{\poseidontwo}{\textsc{Poseidon}2}

\newcommand{\rescueprime}{\textit{Rescue-Prime}}
\newcommand{\rpo}{\textit{Rescue-Prime Optimized}}

\begin{document}
\title{\design{}: Resource-Sharing Permutations for Computational Integrity}

\author{Luca Campa\inst{1}\orcidlink{0009-0004-6916-1918} \and
Thomas De Cnudde\inst{2}\orcidlink{0000-0002-2711-8645} \and
Al Kindi\inst{3} \and
Arnab Roy\inst{4}\orcidlink{0000-0002-3284-7076} \and
Fabian Schmid\inst{5}\orcidlink{0009-0005-7783-0859} \and
Markus Schofnegger\inst{6}\orcidlink{0009-0003-9598-4005} \and
Stefano Trevisani\inst{7}\orcidlink{0000-0002-5375-0214}}

\authorrunning{\phantom{L. Campa et al.}}

\institute{\email{campa1102@gmail.com} \and
Fhenix, Tel Aviv, Israel\\\email{thomas.decnudde@proton.me} \and
Miden, Dubai, United Arab Emirates\\\email{al.kindi@miden.team} \and
\email{arnab.roy@windowslive.com} \and
Technical University of Graz, Graz, Austria\\\email{fabian.schmid@tugraz.at} \and
[[alloc] init], New York, USA\\\email{markus.schofnegger@gmail.com} \and
\email{stefanotrevisani98@gmail.com}}
\maketitle              % typeset the header of the contribution

\begin{abstract}
Cryptographic hash functions over integers modulo a prime play a decisive role in the efficiency and security of proof systems for computational integrity. Early designs focused on compact arithmetic circuits and efficient software execution, primarily targeting general-purpose CPUs rather than hardware accelerators. This work focuses on enabling efficient resource sharing and hardware acceleration alongside efficient software execution.
  
We propose \design{}, a permutation-based hash function designed for hardware acceleration, fast CPU execution, and low circuit complexity in proof systems. To construct the underlying permutation, we turn the Bene\v{s} network into an invertible function over $\F_p^2$ using a Dickson polynomial. Its structure yields a permutation particularly suited to hardware resource sharing and acceleration.

Fully pipelined on FPGA, \design{} matches the throughput of \poseidontwo{}, with $1.36\times$ and $2.15\times$ lower latency for Goldilocks and BabyBear, respectively. Although larger as a standalone core, it makes more effective use of shared hardware: in our dual-mode architecture, where the NTT and hash share field multipliers, \design{} keeps 100\% of them busy, compared with 6.25--20.3\% for the partial rounds of \poseidon{} and \poseidontwo{}.

In software, \design{} outperforms \poseidon{} in all tested 31- and 64-bit configurations, but is slower than \poseidontwo{}, except for the 16-element BabyBear instance. In zero-knowledge proofs, \design{} proves Goldilocks permutations $6$--$10\times$ faster than \monolith{}. Its compact arithmetization uses about $36\%$ fewer trace cells than \poseidon{} and \poseidontwo{} over BabyBear, while its fast variant requires $1.3$--$2.6\times$ the prover time of \poseidontwo{}.
\keywords{Hash Function \and Benes Network \and NTT \and Zero-Knowledge Proofs}
\end{abstract}

\section{Introduction}
\label{section:introduction}
Since their conception~\cite{DBLP:conf/asiacrypt/Albrecht0R0T16}, Arithmetization-Oriented (AO) cryptographic primitives over prime fields have become an important building block of modern zero-knowledge (ZK) proof systems. Due to compact algebraic structure, these constructions can substantially reduce the cost steming from
the underlying constraint system.
Recently, a number of ZK proof systems has focused on `small' prime fields whose arithmetic enables efficient use of the native datapaths and computational resources of modern hardware. 
Prominent examples include Plonky2~\cite{PolygonZero2022Plonky2}, which employs the 64-bit `Goldilocks' field, and the RISC Zero zkVM~\cite{BruestleGafniRiscZero2023}, which instantiates its STARK over the 31-bit `BabyBear' field. More recently, Circle STARKs~\cite{HabockLevitPapini2024Circle} enabled efficient STARK constructions over the Mersenne prime $p=2^{31}-1$.
The growing adoption of small-field proof systems in turn motivated the corresponding constructions of AO primitives. 
For example, the Poseidon2~\cite{GrassiKhovratovichSchofnegger2023Poseidon2} hash function provides dedicated Goldilocks and BabyBear instances, while Tip5~\cite{SzepieniecEtAl2023Tip5} targets Goldilocks specifically, illustrating the efforts toward small-field-aware design.\\

\noindent
\textsc{Hardware Utilization in ZK Proof Systems.} Despite significant progress at the design level, the scaling of ZK proof systems continue to impose substantial computational and architectural demands. This motivates a holistic approach to algorithm-hardware co-design across the cryptographic stack.

The importance of hardware efficiency is increasingly evident in contemporary ZK implementations, where prover performance is often limited by a small set of highly repetitive arithmetic kernels. In such a context, FPGA-based implementations become particularly attractive: beyond parallelism opportunities, they enable fine-grained specialization of finite-field arithmetic, configurable datapaths, and explicit sharing of expensive resources such as modular multipliers, reduction units, memories, and interconnect. Such resource sharing is especially important in ZK workloads, where area efficiency and memory bandwidth frequently constrain the degree of exploitable parallelism. 
Indeed, substantial effort \cite{ZPrize2022,ZKsyncGPU,RISCZeroCUDA,SP1GPU} has been devoted to explore FPGA-based (and GPU-based) acceleration of dominant ZK frameworks components.

However, when considering existing AO primitives, their internal structure may lead to inefficient hardware utilization when implemented directly, for example by requiring many concurrent nonlinear units. 
This observation is explained by the fact that, while substantial effort has been devoted to accelerating the surrounding proof-system machinery~\cite{KocerKTAS2025,Cammarota2022,DBLP:conf/fpga/AasaraaiCMSVB23,DBLP:conf/isca/ZhangWZDMLWZG021}, comparatively less attention has been paid to the design of AO primitives themselves with efficient hardware realization as a first-class objective. This gap has only recently been recognized in RainHash2.0~\cite{DBLP:journals/tches/CuiGKKMRRSS26} by addressing hardware friendliness of AO hash functions by design.\\

\noindent
\textsc{Resource Sharing with NTT.}
One of the most central computational kernels for ZK proof systems is the Number Theoretic Transform (NTT), which is used extensively for polynomial evaluation, interpolation, and low-degree extensions (LDE). Consequently, NTT computation represents a major target for ZK hardware acceleration~\cite{DBLP:conf/isca/ZhangWZDMLWZG021,ZhaoEtAl2023NTT}, as can be seen from practical FPGA-oriented efforts like ZPrize~\cite{ZPrizeNTT}, and existing ZK accelerators devote substantial hardware resources to NTT execution, often alongside other dominant prover kernels such as Multi-Scalar Multiplication (MSM) and hashing.

A pipelined NTT datapath consists mainly of butterfly units, routing logic, and storage, with modular multiplication among its most expensive elements. High-throughput designs therefore instantiate multiple multipliers and aim to keep them highly utilized~\cite{ZhaoEtAl2023NTT,DBLP:conf/isca/ZhangWZDMLWZG021}. This motivates reusing these arithmetic units across prover phases: similar resource sharing between NTT and MSM has already been explored~\cite{ChenEtAl2025NTTMSM}. We investigate whether the same principle can be extended to hashing, utilising the NTT datapath for the hash function with minimal additional hardware.

Thus, the foundational architectural question behind our work is: \textit{Can an AO hash function be designed to map efficiently onto the computational structure of an NTT?} This objective imposes several constraints on the hash design. First, state elements should undergo a similar number of multiplications, allowing the available multipliers to be utilized uniformly, thus making highly asymmetric constructions, like extensive partial rounds, less attractive. Second, bitwise operations and lookup tables should be avoided, as they require functionality that is not naturally provided by an NTT datapath. Finally, nonlinear layers requiring many sequential multiplications, such as large or inverse power maps, are undesirable, since they either require additional arithmetic resources or multiple passes through the shared datapath. These considerations motivate an AO primitive whose computational structure is closely aligned with that of the NTT.

\paragraph{Related Work.} Hash functions related to the NTT have been proposed in several works, albeit with different targets in mind. 
SWIFFT in particular stands out as a compression function with no practical attacks known to date \cite{lyubashevsky2008swifft}. It evaluates a structured linear map (a fixed negacyclic convolution in \(R_q = \mathbb{Z}_q[x]/(x^n+1)\)) realizable efficiently with an NTT, with collision resistance based on the hardness of ideal-SIS \cite{lyubashevsky2008swifft}. However, its inputs are binary vectors, reducing efficiency per hashed bit, and its linear structure means it lacks pseudorandomness and is not typically modeled as a random oracle, limiting its suitability for ZK applications. Vision Mark-32~\cite{DBLP:journals/iacr/AshurMPS24} is another hardware-friendly design and was designed with binary fields in mind, whereas we look specifically into prime field hash functions. 

\subsection{Our Results}
In this work, we introduce \design{}, a new AO permutation over $\F_p$ for small primes $p$, whose structure is based on the \emph{\beneswritten{} network}.
Through efficient hardware resource sharing, our design aims to achieve competitive performance across proof-system arithmetizations, native software, and dedicated hardware, hence being a more balanced solution in the cryptographic stack.
Along the way of pursuing our task, we put forward several contributions.\\

\noindent
\underline{\textsc{An Invertible \beneswritten{} Construction.}}
To construct our permutation-based hash, we start from the Bene\v{s} network~\cite{benesnetwork}
\[
\benes(x,y)=\bigl(f_0(x)+f_2(y),f_1(x)+f_3(y)\bigr),
\]
where the $f_i$ are arbitrary functions. Since this mapping is not invertible in general, we impose the conditions $f_2=f_3$ and require $f_0-f_1$ to be a permutation polynomial over $\F_p$ (see \cref{subsection:non-linear-building-blocks}). Under these conditions, subtracting the two output coordinates eliminates the (y)-dependent term and yields $((f_0-f_1)(x))$, from which $x$, and subsequently $y$, can be recovered uniquely. This turns the Bene\v{s} structure into a permutation over $\F_p^2$. In contrast to Feistel~\cite{DBLP:journals/siamcomp/LubyR88} and Lai--Massey~\cite{DBLP:conf/eurocrypt/LaiM90,DBLP:conf/asiacrypt/Vaudenay99} constructions, whose invertibility relies on preserving a branch, the \beneswritten{} construction transforms both branches at every application.\\

\noindent
\underline{\textsc{Efficient Instantiation with Dickson Polynomials.}}
For an efficient instantiation of $f_0, f_1$ while satisfying the condition that \(P = f_0 - f_1\) is a PP, we utilise the Dickson polynomial of first kind. By choosing
$P = D_d(x,1)$, the Dickson polynomial~\cite{LidlNiederreiter1997FiniteFields,LidlMullenTurnwald1993} of first kind having degree $d$, we are able to split $P$ into $f_0 = x^d$, $f_1 = x^d - D_d(x,1)$. We also impose the necessary conditions on $d$ to ensure \(D_d(x,1)\) and $x^d$ are PPs over $\F_p$. 
By Schur's conjecture (proven in~\cite{Fried1970Schur,Turnwald1995Schur}), an integer polynomial that permutes \(\F_p\) for infinitely many primes \(p\) is
a composition of linear polynomials, power maps and Dickson polynomials. Thus, we choose the two simplest nonlinear functions (over $\F_p$) for instantiating the Bene\v{s} network in our construction. Finally, the value of $d$ is chosen to ensure efficient hardware support. 

We finally stress that the well-known linearization of Dickson polynomials, namely \(D_d(z + z^{-1}, 1) = z^d + z^{-d}\) over \(\F_{p^2}\), which explains the condition \(\gcd(d, p^2-1) = 1\), and which earlier Dickson-based public-key proposals relied upon~\cite{MullerNobauer1985}, is not inherited by our round function. 
Indeed, \(D_d\) is never applied to the state; it only arises as the difference of the two output words of a single block, so that neither \(x^d + y^d\) nor \(f_1(x) + y^d\), is a Dickson polynomial.\\

\noindent
\underline{\textsc{Linear Transformation over $\F_{p^2}$.}}
Since each \beneswritten{} block defines a mapping $\F_p^2 \mapsto \F_p^2$, we treat each pair as a single element of the quadratic extension \(\F_{p^2}\) for the application of the linear transformation. This linear transformation is defined by an MDS matrix over \(\F_{p^2}\) (with entries in \(\F_p\)), and applied after each nonlinear transformation. This approach halves the cost of a full MDS application over the entire state. Combined with the diffusion already provided by the butterflies, this mixes all branches together, while the MDS property facilitates statistical analysis and helps against recent round-skipping attacks exploiting non-MDS linear layers~\cite{DBLP:journals/iacr/MerzG26}.\\

\noindent
\underline{\textsc{Cryptanalytic Security.}}
We support the security of \design{} with statistical and algebraic cryptanalysis.
In particular, we provide an upper bound on the algebraic cryptanalysis complexity based on the $\F_q$-dimension of the quotient ring. We note that the security (parameter) of \design{} is not inferred from the complexity of basis computation. The parameter choices for \design{} e.g. number of rounds, number of output branches ($t$) consisting the compression (or hash) function output, over specific $\F_p$, are derived from both statistical and algebraic cryptanalysis. In particular $t$ is chosen to ensure security against algebraic cryptanalysis for small (e.g. 32 and 64 -bit) primes $p$.\\

\noindent
\underline{\textsc{From NTT Butterflies to a \beneswritten{}-Based Round Function.}}
We exploit the connection between the NTT computation and circuit-friendly hashing in hash-based proof systems. Since hashing follows the LDE step in the computation flow, we design a symmetric permutation that reuses the NTT building blocks, and particularly its multipliers, via \beneswritten{}-based nonlinear layers. The \beneswritten{} construction generalizes the NTT butterfly by allowing arbitrary functions on each of the two branches, letting us embed a nonlinear layer into the same pairwise dataflow already used for the LDE. The resulting design has a simple algorithmic description with identical rounds, each following the well-known nonlinear/liner layer alternation pattern. 
Hardware implementations that target resource-sharing with the NTT additionally benefit from this design because of two factors: balanced computations across all state elements, and omission of bitwise operations and inversions. This allows to maximally reuse the NTT's multipliers with the least resource overhead. 

We stress that the NTT itself is not used in the permutation. The key point is that both the NTT and \beneswritten{} permutation share a common butterfly structure, which we leverage to maximize resource sharing by time-multiplexing with the NTT engine, hence minimizing area occupancy.\\
 
\noindent
\underline{\textsc{Evaluation in Hardware, Software, Proof System.}}
We evaluate \design{} in all three settings it targets (\cref{section:benchmarks}). In hardware, we implement all four instances as fully unrolled, fully pipelined FPGA cores and compare the Goldilocks and BabyBear instances with \poseidon{} and \poseidontwo{}. Over Goldilocks, \design{} runs at 478.5 MHz, delivering 3.1\% and 5.0\% higher throughput than \poseidon{} and \poseidontwo{}, respectively, together with 22.1\% and 26.7\% lower latency, at the cost of a larger standalone core. Over BabyBear, \design{} essentially matches the throughput of \poseidon{} with 32.9\% fewer LUTs and 26.6\% fewer FFs, while cutting the cycle count by 41.3\% and the latency by 41.1\%, a 1.7x improvement. Against \poseidontwo{}, whose throughput is only 4.2\% higher, a modest increase in area buys 55.4\% fewer cycles and 53.6\% lower latency, a 2.15x improvement.

The advantage of \design{} stands out in our dual-mode NTT/hash architecture, in which the NTT and the hash tile share a single pool of field multipliers. Relative to separate NTT and hash tiles, sharing the pool saves 42.9\% of the DSPs, often the limiting resource in FPGA designs for ZK (and FHE), for all shared variants. An exclusive feature of \design{} is its full utilization of the resource pool: its uniform main-round tile keeps every multiplier busy, i.e., 100\% multiplier saturation, whereas the partial rounds of \poseidon{} and \poseidontwo{} occupy only 6.25--20.3\% of it. Thus, \design{} turns the arithmetic an NTT accelerator already contains into a consistently utilized hash datapath, instead of leaving most multipliers idle during partial rounds.

In software, \design{} offers competitive performance across several prime fields, compared to various hash functions. Our Plonky3 benchmarks show speedups of approximately $7$--$13\times$ over \rescueprime{} and up to $2\times$ over \poseidon{}, which it outperforms in all available instances. Compared to the split-and-lookup design \monolith{}, \design{} performs worse for narrow state sizes, but approximately $1.9\times$ faster for the 24-element Mersenne31 state, while requiring exclusively algebraic operations. Against \poseidontwo{}, \design{} achieves a modest $4\%$ runtime reduction for BabyBear at $t=16$, but incurs $12$--$66\%$ higher runtimes elsewhere. These costs follow our choice of uniform full rounds and an MDS linear layer over paired state elements, motivated by hardware resource sharing and structural resilience against round-skipping attacks~\cite{DBLP:journals/iacr/MerzG26}. An additional comparison with a broader range of Goldilocks hash functions appears in \cref{appendix:performance}.

The remainder of the paper is organized as follows. \cref{section:background} covers preliminaries and background, \cref{section:designrationale} and \cref{section:design} present the design rationale and the permutation, \cref{section:securityanalysis} analyzes security, and \cref{section:benchmarks} evaluates the practical performance.

\section{Hardware Acceleration and NTT Resource Sharing}
\label{section:background}
Hardware implementations typically decompose a protocol into performance- or power-critical primitives, which may then be mapped to dedicated accelerators. Hardwired datapath accelerators specialize the datapath for a fixed kernel or small set of kernels, with only limited runtime configurability.
In particular, accelerators proposed in the ZK space are typically hardwired datapath accelerators, featuring multiple kernels including the NTT (\cref{fig:polymul}), Merkle Tree hashing, and Multi-Scalar Multiplication.
\begin{figure}[h]
    \centering
    \includegraphics[width=0.9\textwidth]{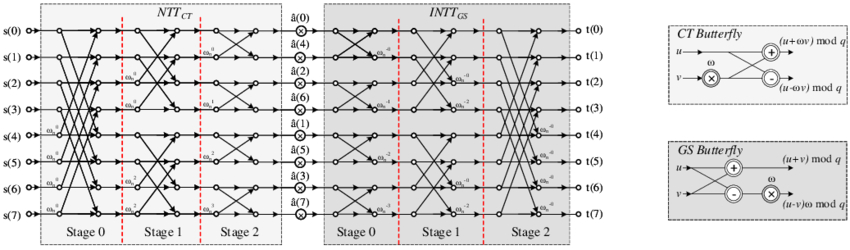}
    \caption{A polynomial multiplication using an NTT, a pointwise multiplication, and an inverse NTT~\cite{bisheh2021high}.}
    \label{fig:polymul}
\end{figure}

Importantly, during the LDE in hash-based STARKs, the inverse NTT interpolates evaluation points to polynomial coefficients, after which the polynomial is evaluated over a larger domain by NTT extrapolation.
Then, AO hashing finalizes the commitment after the LDE step and makes the opening more efficient in a recursive prover.
When several primitives of a protocol occupy distinct parts of the silicon, further optimization is possible through \textit{resource sharing}, where primitives share the same hardware resources wherever the algorithm allows it, reducing total silicon area and thus cost, and increasing adoption.\footnote{In the asymptote, this describes a CPU, where a handful of compute units are shared across any computable algorithm.} 
In our case, where hashing strictly starts after the NTTs are computed, sharing resources between these two algorithms can significantly reduce silicon area, and the saved hardware could be filled with more resources for higher performance, see~\cref{fig:chip-area}.

\begin{figure}[htbp]
    \centering
    \includegraphics[width=0.8\textwidth]{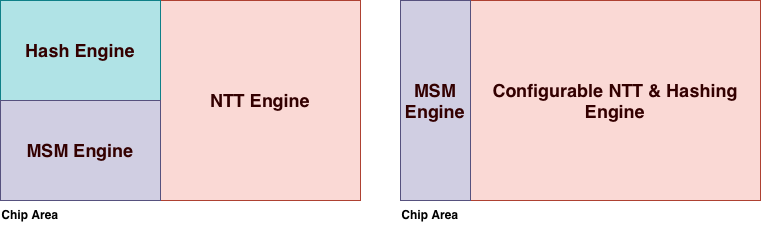}
    \caption{Sharing resources between the NTT and an AO hash function can lead to using more resources towards both algorithms.}
    \label{fig:chip-area}
  \end{figure}

This motivates the design of a AO hash function that can `maximally reuse' the components of an NTT, while minimizing the introduction of additional computational elements.
Overall, an NTT requires, as fundamental building blocks, \emph{butterfly units} 
    consisting of one modular addition, one modular subtraction, and one modular multiplication,
    \emph{multiplexers} to route coefficients to form a spatial permutation, and \emph{delay elements} to route coefficients in time to form a temporal permutation. 
Of these, the modular multiplications in the butterfly units are the most resource-intensive, and sharing this resource is a guiding principle of the design of our AO hash function. Mapping the hash function efficiently on a pipelined NTT imposes the natural constraints:
\begin{enumerate}
  \item Each state element should go through a similar number of multipliers to allow full reuse of the NTT's multipliers. Partial rounds are therefore discouraged.
  \item Bitwise operations or Lookup Tables should be avoided as they require logic that incurs overheads on the NTT. 
  \item High-degree power maps should be avoided. They are very expensive to unroll fully in hardware, and when not unrolled, state elements would need to go through multiple passes to be computed, increasing control overheads. 
\end{enumerate}

\section{Design Rationale}
\label{section:designrationale}
In this section we provide the mathematical details of the different functions used for constructing the \design{} permutation, that in turn is used to construct the hash function. 

\subsection{From an Invertible Bene\v{s} Network to the Nonlinear Transformation}
\label{subsection:non-linear-building-blocks}
The \beneswritten{} network, defining a mapping $\F_p \times \F_p \mapsto \F_p \times \F_p$, is non-invertible in general (depicted in \cref{fig:benes-base}).

\begin{definition}[\beneswritten{} Network~\cite{benesnetwork}]\label{defn:benes-construction}
    Let \(\F_p\) be a prime field. Let \(f_0, f_1, f_2, f_3 : \F_p \to \F_p\) be 4 generic functions. Let \(x,y \in \F_p\). The \beneswritten{} construction is the function \(\benes_{f_0, f_1, f_2, f_3} : \F_p^2 \to \F_p^2\) defined by the map:
    \[
        (x,y) \mapsto (f_0(x) + f_2(y), f_1(x) + f_3(y)).
    \]
\end{definition}
Note that we will drop the subscript from $\mathcal{B}$ for simplifying the notation. For making $\mathcal{B}$ invertible we must choose \(f_0, f_1, f_2, f_3\) carefully.

\begin{figure}
    \centering
    \begin{tikzpicture}[scale=0.7]

%% Horizontal 

\node at (0, 0) (inputtop) {$X_0$};
\node at (0, -3) (inputbot) {$X_1$};

\node at (2, 0) [draw,minimum width=0.5cm,minimum height=0.5cm] (F1) {$f_0$};
\node at (1, -1) [draw,minimum width=0.5cm,minimum height=0.5cm] (F2) {$f_1$};
\node at (1, -2) [draw,minimum width=0.5cm,minimum height=0.5cm] (F3) {$f_2$};
\node at (2, -3) [draw,minimum width=0.5cm,minimum height=0.5cm] (F4) {$f_3$};

\node at (4, 0) [inner sep=0pt] (addtop) {\Large$\boxplus$};
\node at (4, -3) [inner sep=0pt] (addbot) {\Large$\boxplus$};

\node at (5, 0) (outputtop) {$Y_0$};
\node at (5, -3) (outputbot) {$Y_1$};

\draw[->] (inputtop) -- (F1);
\draw[->] (F1) -- (addtop);
\draw[->] (inputbot) -- (F4);
\draw[->] (F4) -- (addbot);
\draw[->] (inputtop) -- ($(inputtop)+(1,0)$) -- (F2);
\draw[->] (F2) -- ($(F2)+(0.5,0)$) -- ($(addbot)+(0,+0.5)$) -- (addbot);
\draw[->] (inputbot) -- ($(inputbot)+(1,0)$) -- (F3);
\draw[->] (F3) -- ($(F3)+(0.5,0)$) -- ($(addtop)+(0,-0.5)$) -- (addtop);
\draw[->] (addtop) -- (outputtop);
\draw[->] (addbot) -- (outputbot);

\end{tikzpicture}
    \caption{A \beneswritten{} network.}\label{fig:benes-base}
\end{figure}
 
\begin{proposition}\label{prop:benes-invert}
    Let $x,y,u,v \in \F_p$ and let \(\benes: \F_p^2 \to \F_p^2\) with \(f_0, f_1, f_2, f_3:  \F_p \to \F_p\) be such that \(f_2 = f_3\), and \(f_2\) and \(f_0 - f_1\) are PPs. Then, \(\benes\) is invertible, meaning that \( \benes(x,y) = (u,v) \iff \benes^{-1}(u,v) = (x,y)\) and \(\benes\) is an injective function in \(\F_p^2\).
\end{proposition}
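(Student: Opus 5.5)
The plan is to write down an explicit inverse map and check that it undoes $\benes$. Put $P = f_0 - f_1$. By hypothesis $P$ and $f_2$ are permutation polynomials of $\F_p$, so they have compositional inverses $P^{-1}$ and $f_2^{-1}$. I will define
\[
\benes^{-1}(u,v) = \Bigl(P^{-1}(u-v),\; f_2^{-1}\bigl(u - f_0(P^{-1}(u-v))\bigr)\Bigr).
\]
The task is then to show that this is a two-sided inverse of $\benes$.

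\textbf{Step 1 (recover $x$).} Suppose $\benes(x,y)=(u,v)$. Since $f_2 = f_3$, the $y$-dependent terms cancel when the coordinates are subtracted:
\[
u - v = f_0(x) - f_1(x) = P(x).
\]
Because $P$ is a permutation, $x = P^{-1}(u-v)$ is uniquely determined.

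\textbf{Step 2 (recover $y$).} Substituting into the first coordinate gives $f_2(y) = u - f_0(x)$. Since $f_2$ is a permutation, $y = f_2^{-1}(u - f_0(x))$ is also uniquely determined. Together, Steps 1 and 2 show that $\benes(x,y)=(u,v)$ implies $\benes^{-1}(u,v)=(x,y)$. In particular $\benes$ is injective: two preimages of the same $(u,v)$ both equal $\benes^{-1}(u,v)$.

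\textbf{Step 3 (converse).} For the reverse implication, assume $\benes^{-1}(u,v)=(x,y)$, so that $P(x) = u - v$ and $f_2(y) = u - f_0(x)$. Then the first coordinate of $\benes(x,y)$ is $f_0(x)+f_2(y) = u$. The second coordinate is
\[
f_1(x)+f_2(y) = f_0(x) - P(x) + f_2(y) = u - (u-v) = v.
\]
Hence $\benes(x,y) = (u,v)$, which gives the equivalence. Alternatively, since $\F_p^2$ is finite, injectivity already implies bijectivity.

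\textbf{Main obstacle.} The only delicate point is noticing where each hypothesis enters. The condition $f_2 = f_3$ is what makes the $y$-terms cancel in $u-v$. Bijectivity of $P$ pins down $x$. Bijectivity of $f_2$ pins down $y$. The rest is routine substitution.
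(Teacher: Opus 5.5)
Your proposal is correct and follows essentially the same route as the paper's proof: subtract the two coordinates so that $f_2=f_3$ cancels the $y$-terms, invert $f_0-f_1$ to recover $x$, then invert $f_2$ on $u-f_0(x)$ to recover $y$. Your Step 3 goes slightly further, checking that the explicit inverse maps back to $(u,v)$ and thereby proving the reverse implication that the paper's proof leaves implicit.
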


\begin{proof}
    Let \(x, y \in \F_p\) be two generic values. \(u,v = \benes(x, y)\). By definition, \(u = f_0(x) + f_2(y)\) and \(v = f_1(x) + f_3(y)\). Let us define \(\delta = u - v = f_0(x) - f_1(x)\). Note that \(f_2(y) = f_3(y)\) cancel each other out. By definition, \(f_0 - f_1\) is a PP in \(\F_p\), therefore there is one and only one value \(\tilde{x}\) such that \(f_0(\tilde{x}) - f_1(\tilde{x}) = \delta\). Then, this value must be the original input \(x\).
    Once \(x\) is found, without loss of generality we can take into consideration either \(u\) or \(v\). Let us take \(u\) into account, then \(u = f_0(x) + f_2(y)\), but \(m = f_0(x)\) is now known. \(f_2(y) = u - m\). By definition, \(f_2\) is a PP in \(\F_p\), therefore there is one and only one value \(\tilde{y}\) such that \(f_2(\tilde{y}) = u-m\). This value must be the original input \(y\).
\end{proof}

\noindent
\textbf{Choosing the \boldmath$f_i$.}
While \Cref{prop:benes-invert} guides the choice of \(f_0, f_1, f_2, f_3\), the choice of the PPs still depends on the finite field or the prime $p$ and must provide minimal number of field multiplications for efficiency. In addition, the choice of
permutation polynomial also has effect on the security analysis. Given a prime $p$, many classes of PPs exist~\cite{LidlNiederreiter1997FiniteFields}.  However, to avoid the complexity of choosing (different) PPs for each given prime, we want a single algebraic description of the nonlinear transformation that instantiates over a family of primes (\cref{section:design}). 
For solving this issue we turn to the Schur's conjecture that is proven by Fried~\cite{Fried1970Schur} and also by Turnwald~\cite{Turnwald1995Schur}. 
Schur's theorem states that a polynomial \(f \in \Z[x]\) permuting \(\F_p\) for infinitely many primes \(p\) is necessarily a composition of linear polynomials, power maps \(x^d\), and Dickson polynomials \(D_d(x,a)\) with \(a \in \Z\). We stress that this statement does not say that no other polynomial permutes any one of our target fields, only that a prime-independent description of the nonlinear layer cannot escape these two families. 

We will choose the two basic nonlinear permutations (from Schur's theorem) for instantiating $f_i$ -- the power mapping $x \mapsto x^d$ and the Dickson's polynomial.\\

\noindent
\textit{Power Functions.}
For \(p\) prime and \(d \in \mathbb{N}\), \(x^d\) is a PP in \(\F_p\) if and only if \(\gcd(p-1, d) = 1\).\\

\noindent
\emph{Dickson's Permutation Polynomials.}
Dickson's polynomials of the first kind, defined as follows, yield PPs under certain condition.
\begin{definition}[Dickson Polynomial~\cite{Dickson1896}]
    Given \(a \in \F_p\), the Dickson polynomial of degree \(d\) is given by
    \begin{equation*}
        D_d(x,a) = \sum_{j=0}^{\lfloor d/2 \rfloor} \frac{d}{d-j}\binom{d-j}{j}(-a)^jx^{d-2j}.
    \end{equation*}
\end{definition}
If \(a \neq 0\) and \(d > 1\), then \(D_d(x, a)\) is a PP over $\F_p$ $\iff$ \(\gcd(d, p^2-1) = 1\)~\cite{shallue2012permutationpolynomialsfinitefields}.

\begin{definition}[Invertible Bene\v{s}]
    Let \(p\) be an odd prime and let \(d_1, d_2 \in \Z_{\phi(p)}\) be two integers such that \(D_{d_1}(x,1), x^{d_2}\) are PPs in \(\F_p\). Moreover, let \(D_{d_1}(x,1) = x^{d_1} + q(x)\) where \(q(x)\) is the remaining part of Dickson's polynomial of degree \(d_1\). The \beneswritten{} functions can be chosen as
    \[
        f_0(x) = x^{d_1}, \quad
        f_1(x) = -q(x), \quad
        f_2(x) = f_3(x) = x^{d_2}.
    \]
\end{definition}

\begin{example}
    Let \(p = 37\). The minimum \(d_1\) for which \(\gcd(d_1, p^2-1) = 1\) is 5. Then, let \(d_1 = d_2 = 5\), \(D_5(x,1) = x^5 -5x^3 + 5x\) then \(q(x) = -5x^3 + 5x\).
    The \beneswritten{} functions are defined as follows:
    \[
        f_0(x) = x^5, \quad
        f_1(x) = -(-5x^3 + 5x), \quad
        f_2(x) = f_3(x) = x^5.
    \]
  \end{example}

The \textit{nonlinear transformation} $\mathcal{F}$ is defined by concatenating say $\ell$, $\mathcal{B}$ functions. This means $\mathcal{F}$ can only process inputs (viewed as field elements) with even length ($2\ell$).   

\paragraph{Statistical Properties.}
 Differential cryptanalysis~\cite{DBLP:conf/crypto/BihamS90} utilises the distribution of output differences corresponding to a specific input difference for a function. 
Let $\alpha, \beta \in \F_p$, the differential probability of obtaining \(\beta\) from \(\alpha\) is defined as
\[
\text{Prob}(\alpha \rightarrow \beta)
=\max_{\substack{\alpha, \beta \neq 0}} \frac{\left|\{ x \in \mathbb{F}_p \mid f(x + \alpha) - f(x) = \beta\} \right|}{\abs{\mathbb F_p}}.
\]
The differential probability of the $f_i$ functions must be considered towards the security analysis of \design{}. Since these functions are polynomials the following lemma is used to derive an upper bound on the differential probability for each $f_i$.
\begin{lemma}
\label{lemma:diffprobsbox}
The differential probability of any function of degree \(d\) is upper-bounded by \(d - 1\) over the domain of the function, i.e., \(\frac{d - 1}{\abs{\mathbb F_p}}\).
\end{lemma}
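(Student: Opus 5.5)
The plan is to fix a polynomial $f \in \F_p[x]$ of degree $d \geq 1$ and a nonzero input difference $\alpha \in \F_p$, and study the difference polynomial
\[
\Delta_\alpha f(x) = f(x+\alpha) - f(x).
\]
For every output difference $\beta \neq 0$, the number of solutions $x \in \F_p$ of $f(x+\alpha)-f(x)=\beta$ equals the number of roots in $\F_p$ of $\Delta_\alpha f(x) - \beta$. So it suffices to bound the degree of this polynomial and to check that it is not the zero polynomial.

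\textbf{Degree computation.} Write $f(x)=\sum_{i=0}^d a_i x^i$ with $a_d \neq 0$. Expanding with the binomial theorem, the $x^d$ terms of $f(x+\alpha)$ and $f(x)$ cancel. The coefficient of $x^{d-1}$ in $\Delta_\alpha f$ is then $d\,a_d\,\alpha$. Hence $\deg \Delta_\alpha f \leq d-1$, and therefore $\deg(\Delta_\alpha f - \beta) \leq d-1$ as well.

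\textbf{Nonvanishing.} I must rule out $\Delta_\alpha f - \beta$ being identically zero, since then every $x$ would be a solution. There are two cases.
\begin{itemize}
\item If $\deg(\Delta_\alpha f) \geq 1$, then $\Delta_\alpha f - \beta$ is a nonzero polynomial. By the standard root bound over the field $\F_p$, it has at most $d-1$ roots.
\item If $\Delta_\alpha f$ is a constant $c$, the equation has no solutions when $\beta \neq c$. When $\beta = c$, every $x$ is a solution.
\end{itemize}

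\textbf{Main obstacle.} The main difficulty is the degenerate second case. It must be excluded under the implicit hypotheses, namely that $f$ is viewed as a reduced function with $1 \leq d < p$ and $d \geq 2$. Then $d \not\equiv 0 \pmod p$, so $d\,a_d\,\alpha \neq 0$ and $\deg \Delta_\alpha f = d-1 \geq 1$ exactly. I will state this assumption explicitly. It covers the $f_i$ used in \design{}, since their degrees are small relative to $p$. For $d=1$ the difference is constant, and the bound $d-1=0$ fails only at the single $\beta=c$. That is the usual linear exception and should be excluded.

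\textbf{Conclusion.} With these hypotheses, maximizing over $\alpha \neq 0$ and $\beta \neq 0$ shows the count is at most $d-1$ for every pair. Dividing by $|\F_p|$ gives $\mathrm{Prob}(\alpha \to \beta) \leq \frac{d-1}{|\F_p|}$.
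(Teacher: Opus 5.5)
Your proposal is correct and follows the paper's argument: the difference polynomial $f(x+\alpha)-f(x)-\beta$ has degree at most $d-1$, so it has at most $d-1$ roots in $\F_p$. You also add two things the paper's one-line proof leaves implicit: a check that this polynomial is not identically zero (its leading coefficient is $d\,a_d\,\alpha\neq 0$ when $2\le d<p$), and the exclusion of the linear case $d=1$, where the stated bound actually fails.
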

The proof is immediate. For \(\deg(f) = d\), the difference polynomial \(f(x) - f(x + \alpha)\) with \(\alpha \in \mathbb F\) has degree \(d-1\) and hence at most \(d-1\) roots in the field. Thus, the Dickson polynomials we use have a maximum differential probability of \(\leq \frac{d - 1}{\abs{\mathbb F_p}}\).

\subsection{Linear Transformation}
The linear transformation ($M$) in the \design{} permutation will follow the nonlinear transformation ($\mathcal{F}$) that is derived from Bene\v{s} network i.e. $M$ is applied to the output of $\mathcal{F}$. Instead of defining $M$ over $\F_p$, we define it over $\F_{p^2}$.
The \(2l\) field elements that are outputs of $\mathcal{F}$ are grouped into \(l\) pairs which are then interpreted as elements of \(\F_{p^2}\). After fixing a basis \(\{1,\theta\}\) of \(\F_{p^2}\) over \(\F_p\), the pair \((u,v) \in \F_p^2\) represents \(u\theta + v \in \F_{p^2}\). We then apply the \(l \times l\) MDS matrix $M$ to these extension-field words.

Consider the \beneswritten{} outputs \(\{(u_i, v_i)\}_{i=0}^{l-1}\) with \(u_i, v_i \in \F_p\). Let \(M \in \F_p^{l \times l}\) be an MDS over \(\F_p\), viewed inside \(\F_{p^2}\) via the natural embedding \(\F_p \hookrightarrow \F_{p^2}\). Every square minor of \(M\) is an element of \(\F_p\), and every nonzero element of \(\F_p\) remains nonzero in \(\F_{p^2}\). Hence, the same matrix is MDS over \(\F_{p^2}\). Because all entries of \(M\) lie in the base field, applying \(M\) to \((u_i\theta + v_i)_i\) is equivalent to applying \(M\) separately to \((u_i)_i\) and \((v_i)_i\) and then recombining the two coefficient vectors.

The structure of the \beneswritten{} network already combines a pair of inputs. Even when \(M\) is applied separately to the two output sets, all \beneswritten{} inputs are therefore mixed together, providing full diffusion. For the initial affine layer, since only full diffusion is needed, we group the inputs in pairs, apply a simple Cooley--Tukey network to each pair, and then apply \(M\). This achieves full diffusion with a very simple and efficient linear layer. From a security perspective, our decision of using an MDS matrix (instead of simple branch permutation or non-MDS matrix) is motivated by the recent cryptanalysis exploiting non-MDS linear transformation~\cite{DBLP:journals/iacr/MerzG26}. It further simplifies the statistical analysis.

\paragraph{Selecting the MDS Matrices.}
We define \(M\) as a circulant matrix over \(\F_p\), that is determined by one vector generator \(c \in \Z^n\), reduced modulo \(p\). This gives a compact specification and admits structured implementations of the matrix--vector product. The implementation uses dimensions \(n \in \{4,6,8,12\}\) depending on the state width, and the reproducible selection procedure below applies to all four dimensions.

All rows listed here use the implementation's first-row convention. The supplementary material uses the transpose first-column convention for the polynomial derivation, so converting between the table and the supplementary material reverses entries \(1,\ldots,n-1\) of the generator. To make the choice of \(c\) reproducible, we fix the following public procedure and run it separately for each dimension.
\begin{enumerate}
    \item \textbf{Generate candidates.} For \(n \in \{4,6\}\), enumerate all \(c \in (\{-B,\dots,B\}\setminus\{0\})^n\) with \(B=4\). For \(n \in \{8,12\}\), the analogous row scans are too large, so we instead sample \(1.5 \cdot 10^{7}\) candidates in the FFT-block parameterisation used by the implementation; this parameterisation is described in \cref{appendix:mds-implementation}. The two sampled searches are deterministic and use SHA-256-derived RNG seeds from \texttt{"BenX-MDS-NUMS-N8-v1"} and \texttt{"BenX-MDS-NUMS-N12-v1"}.
    \item \textbf{Filter to MDS.} Keep only candidates that satisfy the MDS test described below. The selected rows are verified over all target fields.
    \item \textbf{Reduce by symmetry.} For the exhaustive \(n \in \{4,6\}\) searches, quotient by the adder-cost-preserving symmetries of circulant generators: cyclic rotation, reflection, global negation, and global power-of-2 scaling. These transformations preserve MDS-ness over the target odd-prime fields and preserve the adder count in our free-shift/free-negation cost model, up to fixed input/output permutations and global output scaling. We keep the lexicographically smallest representative of each orbit. For the sampled \(n \in \{8,12\}\) searches, we do not apply orbit canonicalisation; determinism comes from the fixed seed and the ranking rule below.
    \item \textbf{Rank candidates.} Within each dimension, sort lexicographically by
    \[
        \bigl(\text{cost}_n(c),\, \lVert c \rVert_1,\, \max_i \lvert c_i \rvert,\, c\bigr).
    \]
    The cost \(\text{cost}_n(c)\) is the direct circulant shift-and-add score for \(n \in \{4,6\}\), and the full FFT-block pipeline score for \(n \in \{8,12\}\), as described in \cref{section:benchmarks}. The table below reports one selected row per dimension; the costs are ranking scores within a fixed \(n\), not a cross-dimension comparison. For \(n \in \{8,12\}\), the claim is lexicographic minimality among the deterministic sample, not global optimality over all rows.
\end{enumerate}
The procedure produces the following generators:
\begin{center}
\small
\begin{tabular}{@{}c l@{}}
\toprule
\(n\) & generator row \(c\) \\
\midrule
4 & \((1, 1, 2, 3)\) \\
6 & \((1, -3, 1, 3, 2, 2)\) \\
8 & \((1, 8, 4, 5, 3, 12, 8, 7)\) \\
12 & \((12, 63, 65, 47, 72, 74, 20, 57, 59, 49, 64, 58)\) \\
\bottomrule
\end{tabular}
\end{center}

\paragraph{MDS Check.}
During the search, candidates are first tested over Mersenne31 using the incremental minor algorithm of Malakhov~\cite{Malakhov2110}. For a circulant matrix, this algorithm exploits cyclic symmetry to compute the \(k\)-minor layer from the \((k-1)\)-minor layer, avoiding the full \(\binom{n}{k}^2\) determinant enumeration. The first vanishing minor short-circuits the test. The selected rows are then re-verified over all target fields: Mersenne31, BabyBear, KoalaBear, and Goldilocks. This final pass catches rows that are MDS over one prime but not another. The implementation details and the FFT-block schedules used for the larger dimensions are discussed in \cref{section:benchmarks}; the algebra behind those schedules is given in \cref{appendix:mds-implementation}.

\section{The \design{} permutation and hash function}
\label{section:design}
We now formally specify the \design{} permutation that we define by utilising the functions described in \cref{subsection:non-linear-building-blocks}. 

%\subsection{Specification}
We define the \design{} permutation over \(\F_p^{2l}\), for some prime \(p\) and some positive integer \(l\). 
We provide a full specification for selected combinations within a set of supported primes \(p \in \left\{2^{31} - 1, 2^{31} - 2^{24} + 1, 2^{31} - 2^{27} + 1, 2^{64} - 2^{32} + 1\right\}\) and supported lengths \(l \in \left\{4, 6, 8, 12\right\}\).

\begin{remark}
A specification of \design{} over other parameter combinations is of course possible, but would need to be investigated for cryptographic security.
As this is not subject of this work, we leave it as an interesting research topic.
\end{remark}

\begin{definition}[\design{} Nonlinear Transformation]
\label{defn:design-non-linear-block-design}
Let \(p\) be a supported prime and \(\benes = \mathcal{B}_{f_0, f_1, f_2, f_3}\) the generic \beneswritten{} construction.
If \(\gcd(p^2-1, 5) = 1\), we fix \(f_0(x) = x^5\), \(f_1(x) = 5x^3 - 5x\), and \(f_3 = f_2 = f_0\). 
Otherwise, we fix \(f_0(x) = x^7 + x^3 + x\), \(f_1(x) = x^3 + x\), \(f_2(x) = x^7\), and \(f_3 = f_2\).
For a supported length \(\ell \), the nonlinear transformation in \design{} is $\nllayer: \F_p^{2l} \to \F_p^{2l}$ such that:
\[
    \left(x_0, \dots, x_{2l-1}\right) \mapsto \concat_{i=0}^{l-1} \benes\left(x_{2i}, x_{2i+1}\right).
\]
\end{definition}

Note that the two different choices of the \(f\) functions are such that \cref{prop:benes-invert}'s requirements are satisfied for all supported primes, so that \(\nllayer\) is indeed a permutation of \(\F_p^{2l}\).
We can easily distinguish the two settings by looking at \(d = \deg\left(f_0\right)\).

\begin{definition}[\design{} Linear Transformation]
    Let \(p\) be a supported prime and \(l\) a supported length, \(\theta \) some arbitrary imaginary unit for \(\mathbb{F}_{p^2}\), and \(M \in \F_{p^2}^{l \times l}\) the MDS matrix constructed in \cref{section:designrationale}.
    Let \(\Psi\left(x_0, x_1\right) = x_0 + \theta{}x_1\) be an embedding of \(\F_p^2\) into \(\F_{p^2}\), and \(\mathcal{E}\left(x_0, x_1, \dots, x_{2l - 2}, x_{2l - 1}\right) = \concat_{i=0}^{l-1} \Psi\left(x_{2i}, x_{2i+1}\right)\) its natural extension from \(\mathbb{F}_p^{2l}\) into \(\mathbb{F}_{p^2}^l\).
    The linear transformation in \design{} is $\mathcal{M}: \F_p^{2l} \to \F_p^{2l}$, such that:
    \[
   \left(x_0, \dots, x_{2l-1}\right) \mapsto \mathcal{E}^{-1}\left(M \cdot \mathcal{E}\left(x_0, \dots, x_{2l-1}\right)\right).
    \]
\end{definition}

Observe that the embedding is clearly a bijection, and similarly \(M\) is an invertible matrix as it is MDS, implying that \(\mathcal{M}\) is also a permutation of \(\F_p^{2l}\).

\begin{definition}[\design{} Round Function]
    \label{defn:round-function}
    Let \(p\) be a supported prime, \(l\) a supported length, and \(\mathbf{c} \in \mathbb{F}_p^{2l}\) some vector of constants. The round function in \design{} (with round constant \(\mathbf{c}\)) is $\roundfunction_{\mathbf{c}}: \F_p^{2l} \to \F_p^{2l}$, such that:
    \[
    \mathbf{x} \mapsto \llayer\left(\nllayer\left(\mathbf{x} + \mathbf{c}\right)\right).
    \]
\end{definition}
Clearly, \(\roundfunction_{\mathbf{c}}\) is a permutation of \(\F_p^{2l}\) for any choice of \(\mathbf{c}\).
An example of a 4-input round function is given in \cref{figure:round-func}.

\begin{figure}
    \centering
    \begin{tikzpicture}[
    scale=0.8,
    >=stealth,
    fbox/.style={draw, minimum width=0.5cm, minimum height=0.5cm, font=\scriptsize},
    mdsblock/.style={draw, rounded corners=2pt, minimum width=7.5cm, minimum height=0.6cm, font=\small},
    every node/.style={font=\small}
]

% === Column positions (two Benes blocks) ===
\def\La{0}
\def\Ra{2.5}
\def\Lb{5.5}
\def\Rb{8}

% === Brace amplitude in cm (5pt ~ 0.18cm) ===
\def\braceAmp{0.18}

% === Inputs ===
\node (x0) at (\La, 0) {$x_0$};
\node (x1) at (\Ra, 0) {$x_1$};
\node (x2) at (\Lb, 0) {$x_2$};
\node (x3) at (\Rb, 0) {$x_3$};

% === Round constant addition ===
\def\constY{-0.7}
\node (c0) [inner sep=0pt] at (\La, \constY) {\Large$\boxplus$};
\node (c1) [inner sep=0pt] at (\Ra, \constY) {\Large$\boxplus$};
\node (c2) [inner sep=0pt] at (\Lb, \constY) {\Large$\boxplus$};
\node (c3) [inner sep=0pt] at (\Rb, \constY) {\Large$\boxplus$};

\node (cc0) at (\La-1, \constY) {$c_{0}$};
\node (cc1) at (\Ra-1, \constY) {$c_{1}$};
\node (cc2) at (\Lb-1, \constY) {$c_{2}$};
\node (cc3) at (\Rb-1, \constY) {$c_{3}$};

\draw[->] (cc0) --(c0);
\draw[->] (cc1) --(c1);
\draw[->] (cc2) --(c2);
\draw[->] (cc3) --(c3);

\draw[->] (x0) -- (c0);
\draw[->] (x1) -- (c1);
\draw[->] (x2) -- (c2);
\draw[->] (x3) -- (c3);

% ================================================================
% === Benes block 1 (full internal structure) ===
% ================================================================
\def\bStartY{-1.4}
\def\fInnerY{-2.15}  % f_1, f_2 y-position
\def\fOuterY{-2.6}   % f_0, f_3 y-position

% Function boxes
\node[fbox] (F1a) at (\La, \fOuterY)       {$f_0$};
\node[fbox] (F2a) at (\La+0.8, \fInnerY)   {$f_1$};
\node[fbox] (F3a) at (\Ra-0.8, \fInnerY)   {$f_2$};
\node[fbox] (F4a) at (\Ra, \fOuterY)       {$f_3$};

% Addition nodes
\def\addYa{-3.4}
\node (adda0) [inner sep=0pt] at (\La, \addYa) {\Large$\boxplus$};
\node (adda1) [inner sep=0pt] at (\Ra, \addYa) {\Large$\boxplus$};

% Wiring: inputs to function boxes
\draw[->] (c0) -- (\La, \bStartY) -- (F1a);
\draw[->] (c0) -- (\La, \bStartY) -- (\La+0.8, \bStartY) -- (F2a);
\draw[->] (c1) -- (\Ra, \bStartY) -- (F4a);
\draw[->] (c1) -- (\Ra, \bStartY) -- (\Ra-0.8, \bStartY) -- (F3a);

% Wiring: function boxes to additions (crossing: f1->right add, f2->left add)
\draw[->] (F1a) -- (adda0);
\draw[->] (F3a.south) -- ++(0,-0.25) -- ($(adda0)+(0.4,0)$) -- (adda0);
\draw[->] (F4a) -- (adda1);
\draw[->] (F2a.south) -- ++(0,-0.25) -- ($(adda1)+(-0.4,0)$) -- (adda1);

% ================================================================
% === Benes block 2 (full internal structure) ===
% ================================================================

% Function boxes
\node[fbox] (F1b) at (\Lb, \fOuterY)       {$f_0$};
\node[fbox] (F2b) at (\Lb+0.8, \fInnerY)   {$f_1$};
\node[fbox] (F3b) at (\Rb-0.8, \fInnerY)   {$f_2$};
\node[fbox] (F4b) at (\Rb, \fOuterY)       {$f_3$};

% Addition nodes
\node (addb0) [inner sep=0pt] at (\Lb, \addYa) {\Large$\boxplus$};
\node (addb1) [inner sep=0pt] at (\Rb, \addYa) {\Large$\boxplus$};

% Wiring: inputs to function boxes
\draw[->] (c2) -- (\Lb, \bStartY) -- (F1b);
\draw[->] (c2) -- (\Lb, \bStartY) -- (\Lb+0.8, \bStartY) -- (F2b);
\draw[->] (c3) -- (\Rb, \bStartY) -- (F4b);
\draw[->] (c3) -- (\Rb, \bStartY) -- (\Rb-0.8, \bStartY) -- (F3b);

% Wiring: function boxes to additions (crossing: f1->right add, f2->left add)
\draw[->] (F1b) -- (addb0);
\draw[->] (F3b.south) -- ++(0,-0.25) -- ($(addb0)+(0.4,0)$) -- (addb0);
\draw[->] (F4b) -- (addb1);
\draw[->] (F2b.south) -- ++(0,-0.25) -- ($(addb1)+(-0.4,0)$) -- (addb1);

% ================================================================
% === Input curly braces (merge two wires into one arrow) ===
% ================================================================
\def\braceTopY{-3.9}

% Wires from additions down to brace endpoints
\draw (adda0.south) -- (\La, \braceTopY);
\draw (adda1.south) -- (\Ra, \braceTopY);
\draw (addb0.south) -- (\Lb, \braceTopY);
\draw (addb1.south) -- (\Rb, \braceTopY);

% Curly braces (mirror = opening downward)
\draw[decorate, decoration={brace, mirror, amplitude=\braceAmp cm}]
    (\La, \braceTopY) -- (\Ra, \braceTopY);
\draw[decorate, decoration={brace, mirror, amplitude=\braceAmp cm}]
    (\Lb, \braceTopY) -- (\Rb, \braceTopY);

% Single arrows from brace tips to MDS
\def\mdsY{-4.9}
\node[mdsblock] (M) at ({(\La+\Rb)/2}, \mdsY) {$M \in \F_{p^2}^{2 \times 2}$};

\pgfmathsetmacro{\braceTipAy}{\braceTopY-\braceAmp}
\draw[->] ({(\La+\Ra)/2}, \braceTipAy) -- ({(\La+\Ra)/2}, \braceTipAy |- M.north);
\draw[->] ({(\Lb+\Rb)/2}, \braceTipAy) -- ({(\Lb+\Rb)/2}, \braceTipAy |- M.north);

% ================================================================
% === Output curly braces (split one wire into two arrows) ===
% ================================================================
\def\braceBotYbase{-5.65}

% Curly braces (tip points up)
\draw[decorate, decoration={brace, amplitude=\braceAmp cm}]
    (\La, \braceBotYbase) -- (\Ra, \braceBotYbase);
\draw[decorate, decoration={brace, amplitude=\braceAmp cm}]
    (\Lb, \braceBotYbase) -- (\Rb, \braceBotYbase);

% Single wires from MDS down to brace tips
\pgfmathsetmacro{\braceBotTipY}{\braceBotYbase+\braceAmp}
\draw ({(\La+\Ra)/2}, \braceBotTipY |- M.south) -- ({(\La+\Ra)/2}, \braceBotTipY);
\draw ({(\Lb+\Rb)/2}, \braceBotTipY |- M.south) -- ({(\Lb+\Rb)/2}, \braceBotTipY);

% Two arrows from brace ends to outputs
\def\outY{-6.35}
\node (y0) at (\La, \outY) {$y_0$};
\node (y1) at (\Ra, \outY) {$y_1$};
\node (y2) at (\Lb, \outY) {$y_2$};
\node (y3) at (\Rb, \outY) {$y_3$};

\draw[->] (\La, \braceBotYbase) -- (y0);
\draw[->] (\Ra, \braceBotYbase) -- (y1);
\draw[->] (\Lb, \braceBotYbase) -- (y2);
\draw[->] (\Rb, \braceBotYbase) -- (y3);

\end{tikzpicture}
    \caption{Example of a round function (see \cref{defn:round-function}) with 4 inputs (\(l = 2\)). The nonlinear layer applies two \beneswritten{} blocks, whose outputs are paired into \(\F_{p^2}\) elements and mixed by an MDS matrix. The inputs, the round constants and the outputs are denoted as $x_i, c_i, y_i$ (for $i = 0, \dots, 3$), respectively.}
    \label{figure:round-func}
\end{figure}

\begin{definition}[The \design{} permutation]\label{defn:design}
    Let \(p\) be a supported prime, \(l\) a supported length, \(R \in \mathbb{N}\) a number of rounds, and \(\mathbf{c}_1, \dots, \mathbf{c}_R \in \F_p^{2\ell}\) distinct round constants. 
    The \design{} permutation (with round constants \(\mathbf{c}_1, \dots, \mathbf{c}_R\)) is the map:
    \begin{equation*}
        \design_{\mathbf{c}_1, \dots, \mathbf{c}_R}^\pi = \roundfunction_{\mathbf{c}_R} \circ \dots \circ \roundfunction_{\mathbf{c}_1} \circ \mathcal{M} \circ \mathcal{L},
    \end{equation*}
    where
    \(\mathcal{L}(x_0, x_1, \dots, x_{2l-1}) = \concat_{i=0}^{l-1} \ell\left(x_{2i}, x_{2i+1}\right)\)  is the pairwise application of the Cooley--Tukey butterfly \(\ell(x_i, x_{i+1}) = (x_i + x_{i+1}, x_i - x_{i+1})\).
\end{definition}
It's immediate to see that \(\design{}^{\pi}\) is a permutation of \(\F_p^{2l}\) for any choice of round constants.

\paragraph{Choosing the Number of Rounds and the Constants.}
The recommended number of rounds for a given field and permutation size relies on the security analysis we carry out in \cref{section:securityanalysis}. We summarize our results for the supported parametrizations in \cref{table:roundnumbers}.
The $R \cdot t$ round constants are derived by rejection sampling from SHAKE256~\cite{shake:FIPS15} applied to the domain-separation ASCII-encoded string
\[
  \texttt{BENES-HASH/v1|p=}\langle p \rangle,
\]
where $\langle p \rangle$ is the field order encoded as a little-endian integer. Each constant is sampled by reading $\lceil \log_2 p \rceil / 8$ bytes from the SHAKE256 stream and accepting the resulting little-endian integer if it lies in $\{0, \ldots, p-1\}$, discarding candidates outside this range.

\begin{table}
  \centering
  \setlength{\tabcolsep}{10pt}
  \caption{Parameter summary for the \design{} permutation, targeting a security level \(\kappa = 128\), where \(2l\) is the state width, \(R\) is the number of rounds, and \(d\) is the degree of the round function. The last four columns specify the polynomials used in the \beneswritten{} construction.}\label{table:roundnumbers}
      \begin{tabular}{ccccccc}
    \toprule
    \(p\) & \(2l\) & \(R\) & \(d\) & $f_0$ & $f_1$ & $f_2 = f_3$\\
    \midrule
    \multirow{2}{*}{\(2^{31} - 1\)} & \(16\) & \(13\) & \multirow{4}{*}{\(5\)} & \multirow{4}{*}{$x^5$} & \multirow{4}{*}{$5x^3-5x$} & \multirow{4}{*}{$x^5$} \\
                                & \(24\) & \(13\) \\
                               \cmidrule{1-3}
    \multirow{2}{*}{\(2^{31} - 2^{24} + 1\)} & \(16\) & \(13\) \\
                                & \(24\) & \(13\) \\
                               \midrule
    \multirow{2}{*}{\(2^{31} - 2^{27} + 1\)} & \(16\) & \(11\) & \multirow{4}{*}{\(7\)} & \multirow{4}{*}{$x^7 + x^3 + x$} & \multirow{4}{*}{$x^3+x$} & \multirow{4}{*}{$x^7$} \\
                               & \(24\) & \(11\) \\
                              \cmidrule{1-3}
    \multirow{2}{*}{\(2^{64} - 2^{32} + 1\)} & \(8\) & \(22\) \\
                                 & \(12\) & \(22\) \\              
    \bottomrule
  \end{tabular}
\end{table}

In the remainder, we will drop the round constants from the subscript of \(\design^{\pi}\) when not relevant from the context, and we will assume the number of rounds follows our recommendations.

We propose two ways to instantiate the \design{} permutation into a compression function.

\begin{definition}[\design{} Sponge compression]\label{defn:design_sponge}
    Let \(p\) be a supported prime, \(l\) a supported length, and \(\design^{\pi}\) the corresponding \design{} permutation.
    Additionally, let \(c < 2l\) be an even positive integer, and \(h \le r\) be a positive integer, where \(r = 2l - c\).
    The \design{} Sponge compression function is the map:
    \begin{equation*}
        \design{}^S\colon \F_p^r \to \F_p^h; \qquad 
        \left(x_0, \dots, x_{r - 1}\right) \mapsto \floor{\design{}^{\pi}\left(x_0, \dots, x_{r - 1} \parallel \mathbf{0}^{c}\right)}_h
    \end{equation*}
    where \(\floor{\cdot}_n\) on a tuple denotes keeping its first \(n\) elements.
\end{definition}

\begin{definition}[\design{} feedforward compression]\label{defn:design_compression}
    Let \(p\) be a supported prime, \(l\) a supported length, and \(\design^{\pi}\) the corresponding \design{} permutation.
    Additionally, let \(h \le 2l\) be a positive integer.
    The \design{} feedforward compression function is the map:
    \begin{equation*}
        \design{}^C\colon \F_p^{2l} \to \F_p^h; \qquad 
        \mathbf{x} \mapsto \floor{\design{}^{\pi}\left(\mathbf{x}\right) + \mathbf{x}}_h
    \end{equation*}
    where \(\floor{\cdot}_n\) on a tuple denotes keeping its first \(n\) elements.
\end{definition}

\paragraph{Choosing the Sponge Parameters.}
The black-box security analysis~\cite{BertoniDPA2008,KhovratovichBM2023} of the Sponge construction requires choosing the capacity \(c\) and the digest size \(h\) such that \(p^h \gtrsim 2^{2\kappa}\) and \(p^c \gtrsim 2^{2\kappa}\) to achieve \(\kappa \) bits of random-oracle indifferentiability. 
Our security analysis supports a similar bound when considering the Constrained-Input-Constrained-Output (CICO) security model (see \cref{section:securityanalysis}). 
Concretely, for \(\kappa = 128\) bits, over the supported \(31\)-bit fields we recommend choosing the \design{} permutation corresponding to \(2l = 24\). Similarly, for the supported \(64\)-bit fields, we recommend choosing \(2l = 12\) and \(c = 4\) (see \cref{table:roundnumbers}). 
For the hash digest size, in all cases we recommend fixing \(h = r = 2l - c\), which matches the required bound.

\paragraph{Choosing the Feedforward Parameters.}
The black-box security analysis~\cite{AndreevaBRT2026} of the feedforward construction requires choosing the digest size \(h\) such that \(p^h \gtrsim 2^{2\kappa}\), to achieve \(\kappa \) bits of collision resistance.
In this setting, we always recommend choosing \(h = l\) for a target security level. In particular, for \(\kappa = 128\), we recommend choosing \(2l = 16\) over the supported \(31\)-bit fields and \(2l = 8\) over the supported \(64\)-bit fields.

\section{Security Analysis}
\label{section:securityanalysis}

\subsection{Security Goals and Adversary Model}
\label{subsec:security-goals}
We study an unkeyed permutation over \(\F_p^{2l}\) used as the permutation inside a Sponge-based hash function. The adversary has oracle access to the hash construction (or the permutation when relevant) and aims to find collisions, second preimages, or preimages, or to distinguish the construction from an idealized random oracle or permutation, within the parameter ranges of interest. We do not consider side-channel leakage or malicious implementations. Our analysis focuses on statistical and algebraic attacks, which are particularly relevant for circuit-friendly designs.

\subsection{Statistical Cryptanalysis}
\label{subsection:statisticalattacks}
We follow the adversary model in \cref{subsec:security-goals} and analyze the statistical properties of the permutation as an unkeyed hash function. Unlike a keyed primitive, we must ensure collision and second-preimage resistance, and also preimage resistance when the permutation is used inside a Sponge. For instance, for a statistically weak permutation in a Sponge, an attacker may find two inputs \(x_0 \neq x_1\) such that a (truncated) differential trail cancels or moves the difference in the non-squeezing part of the output.\footnote{Here we assume that each round function of our permutation is invertible.}

Moreover, \design{} works over \(\mathbb F_p\) for a prime \(p\). We consider truncated differentials only in terms of active nonlinear functions over the base field, and we assume the nonlinear building blocks \(f_0\) and \(f_1\) have no nontrivial linear structures.\footnote{The only subspaces of \(\mathbb F_p\) are \(\{0\}\) and itself.}

\subsubsection{Differential cryptanalysis.}
Differential cryptanalysis \cite{DBLP:conf/crypto/BihamS90} exploit non-random behavior by tracking how input differences propagate through a primitive. The attacker identifies high-probability differential trails, where each trail specifies a sequence of input and output differences across the rounds, and the overall probability is the product of the individual transition probabilities.

To give strict security arguments against this attack, we evaluate the minimum number of active nonlinear components across pairs of rounds using the wide trail strategy. In our design the atomic nonlinear component is the \beneswritten{} block \(\benes: \F_p^2 \mapsto \F_p^2\), and diffusion is provided by a linear layer using an MDS matrix.

For a state of \(2l\) base-field elements (i.e., \(l\) \beneswritten{} blocks per round), we evaluate two linear-layer approaches, namely a \(2l \times 2l\) MDS matrix over \(\F_p\) and an \(l \times l\) MDS matrix over the extension field \(\F_{p^2}\).

\begin{remark}
Note that, since the fields we are working with are closed, any MDS matrix over \(\F_p\) is also an MDS matrix over \(\F_{p^2}\). This makes it easier to choose efficient MDS matrices with small coefficients also in the \(\F_{p^2}\) case.
\end{remark}

\paragraph{Differential Probability of the \beneswritten{} Block.}
Before analyzing the diffusion layer, we establish the maximum differential probability of a single active \beneswritten{} block, assuming the nonlinear functions \(f_i\) have maximum degree \(d\).

\begin{lemma}
\label{lemma:benes_block_diff_prob}
Let \(\benes(x_0, x_1) = (f_0(x_0) + f_2(x_1), f_1(x_0) + f_3(x_1))\) be the \beneswritten{} block mapping over \(\F_p^2\). For any nonzero input difference \((\Delta x_0, \Delta x_1) \neq (0,0)\) and any nonzero output difference \((\Delta y_0, \Delta y_1) \neq (0,0)\), the differential probability is bounded from above by \(\frac{d-1}{p}\).\footnote{Since \beneswritten{} permutes, any nonzero input difference will lead to a nonzero output difference.}
\end{lemma}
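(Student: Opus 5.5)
We bound the number of solutions $(x_0,x_1)\in\F_p^2$ of the system
\[
f_0(x_0+\Delta x_0)-f_0(x_0)+f_2(x_1+\Delta x_1)-f_2(x_1)=\Delta y_0,\qquad
f_1(x_0+\Delta x_0)-f_1(x_0)+f_3(x_1+\Delta x_1)-f_3(x_1)=\Delta y_1,
\]
and divide by $|\F_p^2| = p^2$. The key move is to exploit the structural condition $f_2=f_3$, exactly as in the proof of \cref{prop:benes-invert}. Subtracting the two equations cancels every $x_1$-dependent term and leaves the single univariate equation
\[
P(x_0+\Delta x_0)-P(x_0)=\Delta y_0-\Delta y_1,\qquad P=f_0-f_1,
\]
where $P=D_d(x,1)$ is a permutation polynomial of degree $d$.

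We then split on whether $\Delta x_0$ vanishes.

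\emph{Case $\Delta x_0\neq 0$.} The left-hand side is a polynomial in $x_0$ of degree exactly $d-1$, since the leading coefficient $d\,\Delta x_0$ is nonzero because $p\nmid d$. It has at most $d-1$ roots, by the argument of \cref{lemma:diffprobsbox}. For each admissible $x_0$ we substitute back into the first equation, which becomes
\[
f_2(x_1+\Delta x_1)-f_2(x_1)=c(x_0)
\]
for a known constant $c(x_0)$.
\begin{itemize}
\item If $\Delta x_1\neq 0$, this equation in $x_1$ has degree $d-1$ and so at most $d-1$ solutions, giving at most $(d-1)^2$ pairs in total.
\item If $\Delta x_1=0$, the equation either holds for all $p$ values of $x_1$ or for none, giving at most $(d-1)p$ pairs.
\end{itemize}
In both subcases the count is at most $(d-1)p$, because $d-1<p$. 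The probability is therefore at most $(d-1)p/p^2=(d-1)/p$.

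\emph{Case $\Delta x_0=0$.} Then $\Delta x_1\neq 0$ by assumption. The difference equation above reads $0=\Delta y_0-\Delta y_1$, so it either rules out the transition or imposes no condition on $x_0$. In the latter case $x_0$ is free ($p$ choices) and $x_1$ must satisfy
\[
f_2(x_1+\Delta x_1)-f_2(x_1)=\Delta y_0,
\]
which has at most $d-1$ roots. This again gives at most $(d-1)p$ pairs and probability at most $(d-1)/p$.

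The main obstacle is the bookkeeping of degenerate cases, where one variable is left completely unconstrained. The argument must show that at most one coordinate is ever free, so the count never exceeds $(d-1)p$. The cancellation from $f_2=f_3$ is exactly what guarantees this.

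One must also check that the difference polynomials really have degree $d-1$ and are nonconstant. This holds because $\gcd(d,p^2-1)=1$ together with $d<p$ makes the leading coefficient $d\,\Delta$ nonzero, both for $P$ and for $f_2=x^d$.

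Finally, the probability is understood over the domain $\F_p^2$ of the block, matching the convention of \cref{lemma:diffprobsbox}.
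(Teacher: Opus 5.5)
Your argument is correct, but it takes a different route from the paper's. The paper never subtracts the two output equations and never uses $f_2=f_3$. It looks only at the first coordinate and fixes $x_1$ arbitrarily ($p$ choices). For each such $x_1$, the equation $f_0(x_0+\Delta x_0)-f_0(x_0)=c(x_1)$ has at most $d-1$ roots in $x_0$, so there are at most $p(d-1)$ solution pairs. The case $\Delta x_0=0$ is handled symmetrically, with $f_2$ in place of $f_0$.

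That route is shorter and more general. It applies to any Bene\v{s} block in which $f_0$ and $f_2$ have degree $2\le d<p$, invertible or not. Your route spends the invertibility condition $f_2=f_3$ to decouple $x_0$ from $x_1$; the lemma statement does not impose this condition explicitly, only through its footnote. In return you get more information: when both input words are active you obtain the sharper count $(d-1)^2$, i.e.\ probability at most $(d-1)^2/p^2$. So the bound $(d-1)/p$ can only be approached by transitions with exactly one active input word.

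Two side remarks in your write-up are inaccurate, though harmless:
\begin{itemize}
\item $P=f_0-f_1$ equals $D_5(x,1)$ only in the degree-$5$ instances. In the degree-$7$ instances, $P=x^7+x^3+x-(x^3+x)=x^7$.
\item For Goldilocks, $7\mid p+1$, so $\gcd(7,p^2-1)=7\neq 1$.
\end{itemize}
Your argument never uses the permutation property of $P$ or the gcd condition. It only needs $P$ and $f_2$ to be monic of degree $d$ with $p\nmid d$, which already follows from $d<p$. You should simply drop the appeal to $\gcd(d,p^2-1)=1$.
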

We refer to \cref{appendix:proofs-statistical} for the proof.

\paragraph{Extension-Field MDS Matrix.}
Here we treat each pair of \(\F_p\) elements entering and exiting a \beneswritten{} block as a single element of the extension field \(\F_{p^2}\), and the linear layer is an \(l \times l\) MDS matrix over \(\F_{p^2}\).

\begin{proposition}
\label{prop:mds_ext_field}
A \(2\)-round construction with an \(l \times l\) MDS matrix over \(\F_{p^2}\) activates at least \(l+1\) \beneswritten{} blocks.
\end{proposition}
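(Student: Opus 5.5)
We prove the bound with the standard wide-trail argument, applied at the granularity of \(\F_{p^2}\) words. We identify each \beneswritten{} block with one coordinate of \(\F_{p^2}^l\) via the embedding \(\mathcal{E}\). A block is \emph{active} exactly when its \(\F_{p^2}\) input difference is nonzero. Since \(\mathcal{E}\) is an \(\F_p\)-linear bijection acting word by word, a pair \((\Delta x_{2i},\Delta x_{2i+1})\) is nonzero in \(\F_p^2\) if and only if the corresponding word \(\Delta x_{2i}+\theta\Delta x_{2i+1}\) is nonzero in \(\F_{p^2}\). The number of active blocks entering a layer is therefore the Hamming weight \(\mathrm{wt}(\cdot)\) of the difference vector in \(\F_{p^2}^l\).

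\textbf{Step 1: active blocks are preserved through the nonlinear layer.} Consider two consecutive rounds with a nonzero input difference \(\Delta\). Round-constant addition leaves differences unchanged, so the difference entering \(\nllayer\) in the first round is \(\Delta\), with \(\mathrm{wt}(\Delta)=a_1\geq 1\) active blocks. By \cref{prop:benes-invert}, each block \(\benes\) is a permutation of \(\F_p^2\). Hence a block has nonzero output difference if and only if it has nonzero input difference, as also noted in the footnote to \cref{lemma:benes_block_diff_prob}. The difference \(\Delta'\) leaving \(\nllayer\) thus satisfies \(\mathrm{wt}(\Delta')=a_1\) as a vector of \(\F_{p^2}^l\).

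\textbf{Step 2: the MDS matrix forces many active blocks in the next round.} The linear layer \(\llayer\) acts on \(\mathcal{E}(\Delta')\) by \(M\), giving \(M\mathcal{E}(\Delta')\). This vector, again with constants added, is the input difference of the second round's blocks. Its weight is the number \(a_2\) of active blocks in round two. As argued in \cref{section:designrationale}, \(M\) is MDS over \(\F_{p^2}\): every square minor lies in \(\F_p\) and is nonzero, and nonzero elements stay nonzero under the embedding \(\F_p\hookrightarrow\F_{p^2}\). An \(l\times l\) MDS matrix has branch number \(l+1\), so for every nonzero \(v\in\F_{p^2}^l\) we have \(\mathrm{wt}(v)+\mathrm{wt}(Mv)\geq l+1\). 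Applying this with \(v=\mathcal{E}(\Delta')\ne 0\) gives \(a_1+a_2\geq l+1\), which is the claim.

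\textbf{Main obstacle.} The delicate point is checking that the activity notion is consistent across the two layers. The MDS bound counts nonzero \(\F_{p^2}\) words, while the nonlinear layer counts blocks of two \(\F_p\) coordinates. These agree only because the pairing in \(\mathcal{E}\) coincides with the pairing \((x_{2i},x_{2i+1})\) used by \(\nllayer\), and because \(\benes\) is bijective. The latter rules out a block with nonzero input difference having zero output difference, which would otherwise lower the count. I would state these two facts explicitly.

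The branch-number fact itself I would justify briefly, and it suffices to take \(\mathrm{wt}(v)\leq l\). Suppose \(\mathrm{wt}(v)+\mathrm{wt}(Mv)\leq l\). Let \(S\) be the support of \(v\) and \(T\) the set of zero coordinates of \(Mv\), so \(|T|\geq l-\mathrm{wt}(Mv)\geq|S|\). Choose any \(T'\subseteq T\) with \(|T'|=|S|\). The rows indexed by \(T'\) restricted to the columns indexed by \(S\) form a \(|S|\times|S|\) submatrix that annihilates the nonzero vector \(v|_S\). This submatrix is therefore singular, contradicting the MDS property.
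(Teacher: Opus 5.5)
Your proof is correct and follows the same wide-trail argument as the paper: each \beneswritten{} block is identified with one \(\F_{p^2}\) word, and the branch number \(l+1\) of the MDS matrix is applied across the linear layer between the two nonlinear layers. You are more explicit than the paper on three points. You check that the pairing in \(\mathcal{E}\) matches the pairing used by \(\nllayer\). You note that bijectivity of \(\benes\) makes the MDS input difference nonzero, which the paper's ``any nonzero transition'' takes for granted. You also include the standard proof of the branch-number bound.
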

We refer to \cref{appendix:proofs-statistical} for the proof.

\subsubsection{Truncated Differentials.}
Truncated differential attacks \cite{DBLP:conf/fse/Knudsen94} follow trails of known and unknown differences without necessarily relying on the precise difference or probability. In our construction they are directly subsumed by the differential analysis above. Since the linear layer is an MDS matrix over \(\F_{p^2}\) with branch number \(l+1\), any truncated differential trail over two rounds must activate at least \(l+1\) \beneswritten{} blocks, whether the attacker tracks exact differences or only activity patterns. The MDS property in particular rules out the sparse activity patterns (e.g., a single active word propagating with probability \(1\) for several rounds) typically exploited in designs with weaker diffusion layers. Consequently, the bound in \cref{prop:mds_ext_field} already covers the worst case for truncated differentials.

\subsubsection{Rebound Attacks.}
Rebound attacks were introduced in 2009 and have since been widely used to analyze hash functions against various collision attacks, including combinations of statistical and algebraic approaches \cite{DBLP:conf/fse/MendelRST09,DBLP:conf/asiacrypt/LambergerMRRS09,DBLP:conf/fse/JeanNP12}. The attacker first defines a middle part of the permutation, called the inbound phase, often computed as a precomputation, in which a \enquote{suitable} internal state is sought. From this state the attacker uses differentials connecting it backwards to the start and forwards to the end of the permutation with non-negligible probability, which is the outbound phase. This may yield different input states producing a collision in the permutation output.

More formally, a rebound distinguisher on \design{} would be applied as follows.

\paragraph{Inbound Phase.} We specify a middle part of \(R_1\) rounds. The attacker exploits the low algebraic degree of the \beneswritten{} blocks to find a pair of internal states satisfying a prescribed differential transition
\begin{equation*}
(\alpha_0, \dots, \alpha_{2l-1}) \xrightarrow{R_1 \text{ rounds}} (\beta_0, \dots, \beta_{2l-1}).
\end{equation*}
Between consecutive nonlinear layers, the MDS matrix over \(\F_{p^2}\) ensures that any nonzero difference activates at least \(l+1\) \beneswritten{} blocks over every pair of rounds (see \cref{prop:mds_ext_field}). Therefore, even a single inbound round pair forces the attacker to satisfy at least \(l+1\) independent nonlinear constraints, each of degree \(d-1\). For \(R_1 = 2\) the resulting system of \(l+1\) degree-\((d-1)\) equations admits at most \((d-1)^{l+1}\) solutions, and extending the inbound phase further multiplies this cost.

\paragraph{Outbound Phase.} Starting from the inbound differentials, the attacker propagates backwards and forwards.
\begin{equation*}
    (\alpha_0, \dots, \alpha_{2l-1}) \xrightarrow{R_0 \text{ inverse rounds}} \Delta^\mathrm{(in)}, \qquad
    (\beta_0, \dots, \beta_{2l-1}) \xrightarrow{R_2 \text{ rounds}} \Delta^\mathrm{(out)}.
\end{equation*}
Each pair of outbound rounds activates at least \(l+1\) \beneswritten{} blocks via the MDS branch number, and each active block contributes a factor of at most \(\frac{d-1}{p}\) to the trail probability. After \(R_0\) backward rounds and \(R_2\) forward rounds, the outbound probability is therefore bounded by
\begin{equation*}
    \Pr[\text{outbound}] \leq \left(\frac{d-1}{p}\right)^{(l+1) \cdot \lfloor (R_0 + R_2)/2 \rfloor}.
\end{equation*}

\paragraph{Complexity.}
The total attack complexity is the algebraic cost of the inbound phase (solving a high-degree system over \(\F_p\)) combined with the statistical cost of the outbound phase. Unlike designs with weak diffusion layers (e.g., circular shifts), where probability-\(1\) truncated trails can extend over several rounds and enlarge the inbound phase for free, the MDS matrix in \design{} prevents this, since every round-pair transition implies a minimum number of active \beneswritten{} blocks. The inbound phase is therefore limited to very few rounds before the algebraic matching cost becomes prohibitive, while the outbound probability decreases exponentially in \((l+1)\) per two rounds. For the prime sizes and round counts targeted here, we do not expect rebound distinguishers to be a threat.

\subsubsection{Algebraic Degree Growth and Higher-Order Differentials.}
Higher-order differential and related attacks often exploit low algebraic degree together with rich families of affine subspaces, as is common over \(\mathbb F_{2^k}\). Over prime fields \(\mathbb F_p\) the only \(\mathbb F_p\)-linear subspaces are \(\{0\}\) and \(\mathbb F_p\) itself, so this setting provides no small structured subspaces to sum over, and exhausting one state word would require on the order of \(p\) inputs rather than the many combinatorial choices available over \(\mathbb F_{2^k}\). Combined with the repeated nonlinear permutation polynomials, we therefore expect the algebraic degree of the round function (as a multivariate polynomial map in the input words) to grow quickly with the rounds, and we do not expect higher-order differential distinguishers to be effective.

\subsubsection{Linear cryptanalysis.}
Linear cryptanalysis \cite{DBLP:conf/eurocrypt/Matsui93} exploits non-random correlations between linear (or affine) expressions in the input and output of a round function. Our permutation operates over a large prime field \(\mathbb F_p\) and uses low-degree permutation polynomials as its nonlinear layer, and we do not expect these components to exhibit exploitable linear structures or biases.

In a prime-field setting, linear cryptanalysis generalizes via additive-character correlations~\cite{DBLP:conf/sacrypt/BaigneresSV07,DBLP:conf/eurocrypt/Dobraunig0GK21}. For our nonlinear block \(\benes(x,y)=(f_0(x) + f_2(y), f_1(x) + f_3(y))\), any linear output mask expands into a sum of an \(x\)-only and a \(y\)-only polynomial, so the correlation factors into a product of two one-dimensional correlations. With our low-degree \(f_i\), these are expected to be very small unless the masked polynomial degenerates to an affine function, and the only easy cancellation eliminates the \(y\)-part while still leaving a nonlinear \(x\)-part. Over multiple rounds the MDS layer forces masks across many \(\benes\) applications, so any nontrivial trail accumulates many low-correlation nonlinear steps, making the overall bias negligible for the targeted prime sizes.

\subsubsection{Excluded Attacks.}
We propose \design{} as a hash function rather than a PRF, so we do not consider some specific attack vectors. In particular, we do not claim security against zero-sum partitions \cite{DBLP:conf/fse/BouraCC11} (which can use higher-order differentials \cite{DBLP:conf/fse/Knudsen94,DBLP:conf/crypto/BeyneCDELLNPSTW20} and/or integral/square attacks \cite{DBLP:conf/fse/DaemenKR97}), where the goal is to find disjoint sets of inputs and corresponding outputs that sum to zero. Further, we do not consider partial one-word collisions in the Sponge state. To the best of our knowledge, such a distinguisher cannot be turned into an attack on the hash or compression function, which follows the reasoning of similar work such as \cite{DBLP:conf/uss/0001KR0S21,DBLP:conf/crypto/GrassiHRSWW23}.

The same holds for other attacks such as impossible differential~\cite{DBLP:conf/eurocrypt/BihamBS99} and zero-linear correlation~\cite{DBLP:conf/fse/BogdanovW12}. The impossible differential attack exploits differentials holding with probability zero, and as for zero-sum partitions, we are not aware of any collision or (second) preimage attack based on impossible differential characteristics.

\subsubsection{Parameter Summary.}
For \(\kappa = 128\) bits of security and given the analysis above, two rounds of \design{} (see \cref{prop:mds_ext_field}) suffice to activate enough \beneswritten{} blocks in one direction. We add one more round for margin on full diffusion and two further rounds to counter attacks splitting the construction into parts and similar MitM approaches, leaving \(R_\mathrm{stat} = 5\) rounds against statistical attacks.

\subsection{Algebraic Cryptanalysis}
\label{subsection:algebraicattacks}
\subsubsection{Gr{\"o}bner Basis and Resultant-based Cryptanalysis.}

Gr{\"o}bner basis (GB) attacks, as well as Resultant attacks, are polynomial system solving techniques belonging to the elimination theory. Both techniques end by finding a univariate polynomial in some unknown input variable. The complexity of determining such polynomial depends on the quotient ring dimension \(D\) of the ideal defined by the system of equations describing the target cryptographic primitive. As this value acts as an invariant of such ideal, it can be used to estimate the complexity of the attack and, consequently, to determine the minimum number of rounds needed to reach a certain security level. 
Recent advances in cryptanalysis~\cite{DBLP:conf/crypto/BariantBLAOPR24,DBLP:journals/iacr/BakBBBOP25,DBLP:conf/asiacrypt/YangZYLT24,DBLP:conf/crypto/BariantBBHOR25,DBLP:conf/eurocrypt/CampaR25} of AO primitives have shown that determining the security of a design w.r.t.~the GB computation is not reliable as the given bound (e.g. Macaulay's bound) is not tight and the degree of regularity of the system is not easy to determine. 
Unlike GB computation complexity, procedures depending on the quotient ring dimension are well studied in the literature as well as their complexity, e.g. the complexity of computing the characteristic polynomial of a matrix of size \(D \times D\) is given as \(\bigO(D^\omega)\) where \(\omega\) is the algebra constant. Building upon all those results and considerations, from now on we assume the GB to be given, and we focus on the complexity of the univariate polynomial finding step or the corresponding root finding step.

The first step of the analysis is to consider the polynomial system describing the target cryptographic primitive within the context of the well-known CICO problem~\cite{report-stark-friendly}. Further details and the formal definition of the CICO problem can be found in~\Cref{appendix:ideal-degree-analysis}.

\paragraph{Univariate Polynomial Finding.}
As introduced, we assume computing the Gr{\"o}bner basis for \design{} is easy. The complexity of finding the univariate polynomial is based on the quotient ring dimension of the target ideal. In the case of zero-dimensional ideals, the degree of such a univariate polynomial is exactly the quotient ring dimension. Our approach to determine such a degree is described as: \textit{(i)} eliminate useless variables from the given polynomial system: we eliminate $n-k$ variables per round, thus obtaining $k$ equations per round, for a total of $Rk$ equations in $Rk$ variables; \textit{(ii)} show that the B{\'e}zout bound is tight, even with high probability, by showing that the final polynomial system has no solutions at infinity.

\begin{proposition}[\(D\) for the \(\cico\)-($n$-$k$,$k$) Problem]
\label{prop:quot-ring-dim-CICO}   
Let consider the \(\cico\)-($n$-$k$,$k$) problem applied to the \design{} permutation. Let \(n = 2l\) be the number of branches and let \(k\) denote the number of constrained outputs ($n-k$ inputs are constrained to obtain a determined polynomial system). Each round generates \(k\) polynomials of degree \(d \in \{5, 7\}\) (depending on the prime). With high probability, the quotient ring dimension is given as \(D = d^{kR}\), where \(R\) denotes the number of rounds. 
\end{proposition}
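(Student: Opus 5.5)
We follow the two-step strategy announced just before the statement: first reduce the \(\cico\) system to a square system of \(kR\) equations in \(kR\) variables, each of degree \(d\), and then show that this system has no solutions at infinity. Once that is done, the Bézout bound is attained: for a zero-dimensional ideal generated by a regular sequence whose homogenised system has only the trivial common zero, the quotient ring dimension equals the product of the degrees. Hence \(D = \prod_{j=1}^{kR} d = d^{kR}\).

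\textbf{Modelling and elimination.} Introduce one variable per state word at the input of every nonlinear layer. Since \(\mathcal{M}\circ\mathcal{L}\) and each \(\llayer\) are linear, all other intermediate values are affine combinations of these variables. The \(\cico\) constraint fixes \(n-k\) input words, leaving \(k\) free input variables. In each round we write the \(n\) outputs of \(\nllayer\) (degree \(d\) in the round inputs) and apply \(\llayer\). We keep \(k\) fresh variables for the next round and use the remaining \(n-k\) linear relations to express the other words affinely in terms of the kept variables. This is possible because any \(n-k\) rows of an MDS-derived map are independent. The result is \(k\) polynomial equations of degree exactly \(d\) per round. In the last round, the \(k\) constrained outputs give the \(k\) equations directly. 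Altogether this yields \(kR\) equations in \(kR\) variables. The degree is exactly \(d\) because the leading terms \(x^d\) (from \(f_0\)) and \(y^d\) (from \(f_2\)) do not cancel under generic mixing. For \(d=7\), note that \(f_1\) has degree \(3\), so the second Beneš output still has degree \(7\) through \(f_3=x^7\).

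\textbf{No solutions at infinity.} Take the homogeneous degree-\(d\) parts of the equations. Each Beneš output's top part is \(x^d + y^d\) or \(y^d\) (respectively \(x^d+y^d\) for \(d=5\), with the \(f_1\) top part \(5x^3\) dropping out at degree \(5\) since \(x^5\) dominates; one must check which monomials are top). The leading system is therefore of the form \(L_j(\text{top parts}) = 0\), where the top parts are pure \(d\)-th powers of affine-linear forms in the variables of the previous round. Proceed by induction on the rounds. The first round's highest forms involve only the \(k\) free input variables, and the MDS property forces those forms to vanish only at zero. That zero then propagates: the next round's inputs reduce to the fresh variables alone, and so on. We conclude that the only projective zero is the trivial one.

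\textbf{Main obstacle.} The main obstacle is that this last step is only true generically. Sums of \(d\)-th powers such as \(x^d + y^d = 0\) have nontrivial solutions over the algebraic closure (\(x=\zeta y\) with \(\zeta^d=-1\)). Moreover, the specific MDS coefficients and the CICO choice of fixed positions could create such degeneracies. I would therefore not claim a deterministic proof. Instead I would argue that the leading-form system depends polynomially on the round constants and matrix entries only through linear combinations. The condition "has a nontrivial zero at infinity" is then a closed (resultant) condition, which is not identically satisfied because a specific instance is nonsingular. By Schwartz--Zippel, it fails for all but an \(O(\deg/p)\) fraction of parameters, which is what justifies "with high probability". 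I would support this with small-instance Gröbner basis experiments confirming \(D=d^{kR}\) for small \(R\).
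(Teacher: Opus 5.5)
Your outline (reduce to $kR$ equations in $kR$ variables, apply B\'ezout, exclude solutions at infinity, invoke genericity) matches the paper's. However, the elimination step you describe is false beyond the first round, and fixing it requires an ingredient you are missing.

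Only the round-$0$ words satisfy linear relations, namely those coming from the $n-k$ fixed inputs. For $i\ge 1$, the $n-k$ non-kept words of $\vars{x}^{(i)}$ are \emph{not} affine in the $k$ kept words $\vars{x}^{(i)}_{n-k:n-1}$. As in the paper's $g_j^{(i)}$, they are affine in $\vars{x}^{(i)}_{n-k:n-1}$ plus a degree-$d$ polynomial in the previous round's kept variables. Feeding them into the next nonlinear layer produces terms of total degree $d^2$ in $\vars{x}^{(i-1)}_{n-k:n-1}$. So the equations of rounds $2,\dots,R$ do not have total degree $d$. In the ordinary grading your B\'ezout number is then at least $d^{k}\cdot d^{2k(R-1)} > d^{kR}$. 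Hence, if the proposition holds, the ordinarily homogenized system \emph{must} have solutions at infinity. Your ``no solutions at infinity'' step therefore cannot succeed for the system you actually obtain.

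The paper handles this by giving the round-$i$ variables weight $d^{i}$ and using weighted homogenization. Then each $g_j^{(i)}$ has weighted degree $d^{i}$, and each round-$i$ equation has weighted degree $d^{i}$. The weighted B\'ezout number is $\prod_{i=1}^{R}(d^{i})^{k}/(d^{i-1})^{k}=d^{kR}$. The paper then asserts that setting the homogenizing variable to zero leaves round-wise homogeneous degree-$d$ systems in $k$ variables, and applies the genericity argument to those. An equivalent alternative is to eliminate all intermediate variables and study the $k$ output equations of degree $d^{R}$ in the $k$ free inputs.

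Two smaller points:
\begin{itemize}
\item Your claim that ``the MDS property forces those forms to vanish only at zero'' is unjustified. It is exactly the generic statement the paper borrows from the theory of dense homogeneous systems.
\item Your Schwartz--Zippel randomisation over the round constants is vacuous. Constants only affect lower-order terms and never enter the leading forms, and the MDS matrix is fixed. So your ``with high probability'' rests on the same heuristic as the paper's, not on an actual probability space.
\end{itemize}
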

The proof is given in \Cref{appendix:ideal-degree-analysis}.
This result was further supported by experimental evidence obtained by computing the GB of the polynomial system modelling the CICO-($n$-$k$,$k$) problem for toy versions of \design{}.

\paragraph{Factorization or Root Finding.}

Once the univariate polynomials for the unknown inputs are found, we compute their roots to obtain preimages. Several root-finding algorithms over finite fields exist, e.g., Cantor--Zassenhaus~\cite{cantor1981new} and Kaltofen--Shoup~\cite{Shoup1993FactoringPO}, with complexity depending on the polynomial degree and the field size. For instance, Cantor--Zassenhaus has expected time \(\mathcal{O}(D \log^2 D \log \log D)\) for a degree-\(D\) polynomial over a finite field. Although full factorization yields all preimages, finding a single root often suffices to compromise preimage resistance. Single-root finding is generally cheaper than full factorization, often \(\mathcal{O}(D \log D)\) using probabilistic methods or by computing the GCD of the polynomial with \(x^{p} - x\) (the field equation), where \(p\) is the field size. We expect this GCD to have much smaller degree than the original polynomial, so factoring the final GCD is far more efficient. For a degree-\(D\) polynomial \(h(x)\) over \(\F_p\), computing \(\gcd(h(x), x^{p} - x)\) costs \(\mathcal{O}(D \log^2(D))\) with the Half-GCD algorithm.

For 64-bit prime fields the field equation has high degree. To avoid this, we use the observation of~\cite{von2003modern,DBLP:journals/tosc/GrassiKRS22} that \(\gcd(h(x), x^{p} - x) = \gcd((x^p \pmod{h(x)}), h(x))\) when \(\deg(h(x)) < p\). In particular, \((x^p \pmod{h(x)}) -x\) has degree much lower than \(p\) when \(\deg(h(x)) \ll p\).
The resulting complexity estimate, accounting for the polynomial multiplication to compute \(x^p\), the GCD computation, and the final factorization, is given as \(D\log^2(D) \cdot (\log(D) + \log(p))\cdot (1 + 63.43\cdot \log(\log(D))).\)

\paragraph{The Minimum Number of Rounds.}
The complexity of the univariate polynomial finding and the root-finding steps strongly depends on the quotient ring dimension. Due to recent advances in algebraic cryptanalysis and, in particular, on the usage of elimination-based methods (e.g. resultant attacks), we decided to take a conservative approach toward the estimation of the minimum number of rounds needed to reach a given security level. In particular, we decided to base our estimation on the probabilistic root-finding complexity, given as \(D\log(D)\) where \(D\) is the quotient ring dimension. Hence, the number of rounds \(R\) needed to reach a security level of \(\kappa\) bits is given as \[R_\mathrm{alg} = \min\{ R \mid d^{Rk}\log\left(d^{Rk}\right) \ge 2^\kappa \},\] where \(d\) is the degree of the nonlinear functions and \(k\) is the number of constraint outputs. Considering the parameters given in \cref{table:roundnumbers}, the formula would give $7$ rounds for Mersenne31/KoalaBear, $6$ for BabyBear and $11$ Goldilocks.

\subsubsection{Interpolation Attacks.}
A highly effective technique is the interpolation attack, introduced by Jakobsen and Knudsen in 1997~\cite{DBLP:conf/fse/JakobsenK97}. Given a keyed function $E_k(\vars{x}) : \F_p \mapsto \F_p$, the attack reconstructs a polynomial representation of $E_k$ without knowing the secret key. With such a polynomial, the adversary can evaluate the function on arbitrary inputs, enabling forgeries and related attacks. In the best case, the interpolation polynomial can be recovered with Fast Lagrange interpolation in $\mathcal{O}(D\log(D))$, where \(D\) is the number of required input-output pairs.
The attack extends naturally to unkeyed permutations $E(\cdot)$ when the interpolation polynomial can be derived without full codebook access, enabling forgery attacks infeasible for (pseudo-)random permutations. When the permutation is used in a Sponge construction, only partial outputs are observable, but polynomial reconstruction may still let the adversary compute preimages, breaking preimage resistance.

Each permutation output can be written as a multivariate polynomial in the input variables. In the worst case, when all but one input is fixed or known, each output reduces to a univariate polynomial in the remaining unknown. If its degree is low enough, it can be recovered from few input-output pairs via interpolation, after which it allows efficient output prediction or preimage computation using root-finding.
Hence, let's consider the degree of each output polynomial in the unknown input.

\begin{lemma}[Degree of the Output after \(R\) Rounds]
\label{lem:degree-after-r-rounds}
Let \(n\) be the number of branches (inputs) of the permutation and let \(d = \deg(f_0) = \deg(f_2) = \deg(f_3)\) be the maximum degree of the nonlinear functions. The degree of each output after \(R \ge 1\) rounds of the \design{} round function is given as \(d^{R}\).
\end{lemma}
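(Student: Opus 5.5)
We argue by induction on the number of rounds $R$, working in the univariate setting of the interpolation attack: all inputs but one, say $x_j$, are fixed, so each state word is a polynomial in $\F_p[x_j]$. We must show that after $R$ rounds every output word has degree exactly $d^R$. For the upper bound we track the maximum degree over the state; for the lower bound we track leading coefficients. The initial affine map $\mathcal{M}\circ\mathcal{L}$ is linear and full-diffusion. By invertibility and the MDS property, each word after it is affine in $x_j$ with generically nonzero slope, so the degree-$1$ starting point holds for all words.

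\emph{Upper bound.} Assume every word entering round $r$ has degree at most $d^{r-1}$. Adding constants does not change degrees. Each $\benes$ block outputs $f_0(x)+f_2(y)$ and $f_1(x)+f_3(y)$. In both instantiations of \cref{defn:design-non-linear-block-design} we have $\deg f_1<d$ and $\deg f_0=\deg f_2=\deg f_3=d$, so each output has degree at most $d\cdot d^{r-1}=d^r$. The linear layer $\llayer$ takes $\F_p$-linear combinations of the coordinates $u_i,v_i$ separately, because the entries of $M$ lie in $\F_p$. Degrees therefore cannot increase, and after round $r$ every word has degree at most $d^r$.

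\emph{Lower bound (main obstacle).} We carry a stronger hypothesis: entering round $r$, each pair $(x_{2i},x_{2i+1})$ has leading terms $a_i x_j^{d^{r-1}}$ and $b_i x_j^{d^{r-1}}$. The $x_j^{d^r}$-coefficient of the first block output is then $a_i^d+b_i^d$. The second output has coefficient $b_i^d$, since $\deg f_1<d$. The $\F_p$-linear layer sends these coefficient vectors to $M(a^d+b^d)$ and $Mb^d$, where powers are taken componentwise.

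The difficulty is that some component of $M(a^d+b^d)$ or $Mb^d$ might vanish, which would drop the degree. We would handle this in the same spirit as \cref{prop:quot-ring-dim-CICO}, namely "with high probability". The coefficients $a,b$ depend on the round constants and on the fixed input values, both of which enter as independent random field elements. Each vanishing condition is a nontrivial polynomial equation in these parameters. To see it is nontrivial, observe that $M$ is invertible and has no zero entries (it is MDS, so all $1\times1$ minors are nonzero). A given row combination therefore cannot vanish identically, because varying one $a_i$ or $b_i$ changes it. By Schwartz--Zippel, it vanishes with probability $O(d^R/p)$. A union bound over the $2lR$ words keeps the total failure probability negligible for our primes.

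This gives both bounds, so every output has degree exactly $d^R$. The degree $d^R$ is at most $p-1$ for the relevant range of $R$, so no reduction modulo $x_j^p-x_j$ interferes. The bookkeeping for fixing the free variable at an arbitrary position $j$ is routine.
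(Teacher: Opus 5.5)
\paragraph{Upper bound.} Your upper bound is correct, and it is essentially the whole of the paper's argument. The paper pushes degrees through each $\benes$ block via maxima such as $\max(d\deg x,d\deg y)$. It then notes that $\mathcal{M}$ takes $\F_p$-combinations of the even and odd coordinates separately, and reads these maxima as equalities without discussing cancellation.

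\paragraph{Gap in the lower bound.} Your instinct to add a lower bound is right, but the mechanism fails. Suppose $w$ has degree at most $d^{r-1}$ and $x_j^{d^{r-1}}$-coefficient $a$. Then the $x_j^{d^r}$-coefficient of $f_0(w+c)$ is $a^d$, independently of $c$ and of all lower-order terms. Also $f_1$ contributes nothing since $\deg f_1=3<d$, and the fixed inputs only add constants after $\mathcal{M}\circ\mathcal{L}$. So the top coefficients obey the \emph{deterministic} recursion $(a,b)\mapsto(M(a^d+b^d),\,Mb^d)$. It starts from slopes that depend only on $M$, the butterfly $\ell$ and the position $j$. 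Exact degree $d^R$ of every word is therefore equivalent to the $R$-th iterate having no zero coordinate. Round constants and fixed inputs never enter, so Schwartz--Zippel has nothing to act on. Treating $a_i,b_i$ as free parameters is the error.

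\paragraph{The cancellation actually occurs.} Take $j=2i_0+1$ odd. Then $\ell$ gives $x_j$ slopes $(+1,-1)$ in pair $i_0$. Since $M$ acts on the two coordinates separately, every pair enters round~1 with $(a_i,b_i)=(M_{i,i_0},-M_{i,i_0})$. As $d\in\{5,7\}$ is odd, $a_i^d+b_i^d=0$ identically. For example, $(at+\gamma)^5+(-at+\gamma')^5$ has degree at most $4$ in $t$ for all constants $\gamma,\gamma'$. Hence the first output $f_0(x)+f_2(y)$ of every block, and so every even-indexed word after round~1, has degree at most $d-1$. In the univariate setting the exact-degree claim is false for $R=1$, and your strengthened induction hypothesis breaks at its first step.

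\paragraph{What survives.} Your argument (and the paper's) establishes only the upper bound $d^R$. Exactness for $R\ge2$ would require checking, for the concrete $M$, $p$ and every position $j$, that the deterministic recursion never produces a zero coordinate. That is a direct computation, not a probabilistic argument.
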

The proof is given in \Cref{appendix:proofs-algebraic-degree}.

\begin{proposition}
    \label{prop:interpolation-attack-complexity}
    Let \(n\) be the number of branches (inputs) of the \design{} round function and let \(d = \deg(f_0) = \deg(f_2) = \deg(f_3)\) be the maximum degree of the nonlinear functions. Then, if all but one input is fixed, after \(R\) rounds each output can be represented as a univariate polynomial of degree \(D = d^{R}\) in terms of the unknown input word. Hence, the interpolation attack complexity is given as \(\mathcal{O}(D\log(D))\).
\end{proposition}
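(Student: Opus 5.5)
The plan is to reduce the statement to \cref{lem:degree-after-r-rounds} by specialising the multivariate output polynomials to a single free variable. Then I would check that the degree bound survives the specialisation, and finally invoke fast Lagrange interpolation.

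Fix every input word except one, say $x_j$, to arbitrary constants in $\F_p$. Each output of $\design^\pi$ after $R$ rounds is a polynomial in the inputs. Under this substitution it becomes a univariate polynomial $g(x_j)$. Its degree is at most the total degree of the multivariate output, which is $d^R$ by \cref{lem:degree-after-r-rounds}.

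For the lower bound, showing the degree is exactly $D=d^R$, I would repeat the inductive argument of that lemma directly in the univariate setting. The initial layer $\mathcal{M}\circ\mathcal{L}$ is affine and invertible, and its branch structure (Cooley--Tukey followed by an MDS matrix with no zero entries) makes every word after it a nonconstant affine function $a_i x_j + b_i$ with $a_i \neq 0$. Round constants do not change degrees.

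Inside a \beneswritten{} block, the first output is $f_0(u)+f_2(w)$. If $u,w$ both have degree $e$ in $x_j$ with leading coefficients $\alpha,\beta$, its leading term is $(\alpha^d+\beta^d)x_j^{de}$; the analogous term for the second output is $\beta^d x_j^{de}$, since $\deg f_1<d$. So the second output always attains degree $de$, and the first can lose degree only on the algebraic condition $\alpha^d+\beta^d=0$.

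The main obstacle is this possible leading-coefficient cancellation, both inside the blocks and in the MDS layer, where $\sum_i M_{ki}\gamma_i$ could vanish. I would handle it the way \cref{prop:quot-ring-dim-CICO} is phrased. The leading coefficients are fixed polynomial expressions in the round constants and in the matrix entries, and are nonzero as formal expressions. By Schwartz--Zippel, for round constants sampled as in the specification, the degree equals $d^R$ except with probability $O(R\,l\,d^R/p)$. I would therefore state the exact degree as holding with overwhelming probability, while the upper bound $D\le d^R$ holds unconditionally. The upper bound is all the attack complexity needs.

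Finally, $g$ is determined by $D+1$ evaluations obtained by querying the permutation on inputs differing only in $x_j$. Fast Lagrange interpolation recovers $g$ in $\bigO(D\log D)$ field operations, giving the claimed complexity.
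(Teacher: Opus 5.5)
The overall route is the paper's: you specialise the multivariate output, bound its degree by \cref{lem:degree-after-r-rounds}, and interpolate from \(D+1\) chosen queries. The unconditional bound \(\deg g\le d^R\) is fine, and it is all the \(\bigO(D\log D)\) claim needs.

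\textbf{The exactness step fails.} The Schwartz--Zippel argument cannot work, because the coefficient of \(x_j^{d^R}\) does not depend on the round constants or on the fixed inputs. Additive constants only feed lower-order terms. Since \(f_0,f_2,f_3\) are monic of degree \(d\) and \(\deg f_1<d\), the coefficients at degree \(d^{i+1}\) after round \(i+1\) come from those at degree \(d^i\) purely through \(\alpha\mapsto\alpha^d\) and the fixed entries of \(M\). Each such coefficient is therefore one fixed element of \(\F_p\). There is nothing for Schwartz--Zippel to randomise, and formal non-vanishing in the matrix entries is irrelevant because \(M\) is fixed.

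Worse, the coefficient genuinely vanishes in a basic case. Let the free word be \(x_j\) with \(j=2m+1\). Then \(\ell\) produces \((x_{2m}+x_j,\,x_{2m}-x_j)\). Since \(\mathcal{M}\) applies the same \(M\) to even and odd coordinates, every first-round block receives leading coefficients \((\alpha,\beta)=(M_{k,m},-M_{k,m})\), so \(\alpha^d+\beta^d=0\) for odd \(d\). Every first output of round 1 then has degree at most \(d-1\). Hence for \(R=1\) the even-indexed outputs have degree \(<d^R\) for every choice of constants. Separately, your failure bound \(\bigO(R\,l\,d^R/p)\) would be vacuous for the recommended parameters, because \(R=\floor{\log_d p}\) forces \(d^R/p>1/d\).

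\textbf{What does work is deterministic bookkeeping of the top coefficients.} After each application of \(\mathcal{M}\), block \(k\) holds top coefficients \((r_i\gamma^{(i)}_k,\gamma^{(i)}_k)\), with a ratio \(r_i\) common to all blocks. One checks that \(r_{i+1}=r_i^d+1\) and \(\gamma^{(i+1)}=M(\gamma^{(i)})^{\circ d}\), where \(\circ d\) is the entrywise \(d\)-th power. The starting values are \(r_0=\pm1\) according to the parity of \(j\), and \(\gamma^{(0)}=\pm\) the \(m\)-th column of \(M\), with \(m=\floor{j/2}\).

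For odd \(j\) this gives \(r_1=0\) but \(r_2=1\). From round 2 on, the top term of both block outputs comes from \(f_2(w)\) and \(f_3(w)\) alone, so the degree can recover. Exact degree \(d^R\) for all outputs is equivalent to \(r_R\neq0\) together with \(\gamma^{(R)}\) having no zero entry. That is a finite computation for each parameter set and each parity of \(j\), not a probabilistic statement.

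For comparison, the paper's proof of \cref{lem:degree-after-r-rounds} takes output degrees to be maxima of input degrees and never addresses cancellation. The proposition then transfers that total-degree statement to the univariate specialisation without comment. So the paper does not establish exactness either, and the \(R=1\), odd-\(j\) case contradicts the proposition's exact-degree claim as stated. You were right to flag the obstacle, but it has to be resolved by this explicit computation, or the claim should be weakened to \(D\le d^R\).
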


\subsubsection{Polynomial degree and Security of Hash Function}
\label{rem:interpolation-degree}
In a Sponge mode, a univariate output polynomials can be obtained by fixing all but one variable of the input to the \design{} permutation. A common approach in constructing AO permutation over $\F_p$, is to ensure that the degree of the univariate output polynomial is $\approx p-2$. When $p$ is large and the degree $d \lll p$ one can construct the output polynomials with $\mathcal{O}(d)$ queries. If $p$ is small, this can be feasibly done with $\mathcal{O}(p)$ queries, even if it is ensured by design that the
degree of the univariate polynomials has maximum possible degree. The construction of such univariate polynomial is not a weakness unless the corresponding
query complexity is less than $p-1$ in general. In a small field the situation is more subtle because both $p-1$ or $d$ $(< p-1)$ queries are practically feasible. Nevertheless, one can still \textit{choose} to keep the general criteria (even if it gives practical complexity).     

When the permutation is defined over $\F_p$ for small $p$, we must consider the security implication of the practical construction of such univariate polynomials. The primary concern is to ensure the security of a hash (or compression) function. Suppose we have $t$ such univariate polynomials, then finding the common root(s) of them implies finding preimage (or collision). When $t$ random polynomials are given, the probability that they will share one or more root(s) decides the security bound. Note that such common root(s) may reside in the closure of $\F_p$ in general. However, for a cryptographic security, we consider the probability that the common root(s) are in $\F_p$. Formally, for collision security we consider the event $N = \left| \{a \in \F_p: f_i = (a) = 0, 1 \leq i \leq t\}\right| \geq 2$. We show that using existing results \cite{Jain2024commonroot} on the distribution of common roots of random polynomials, it is possible to estimate the probability (bound) of $\text{Pr}[N \geq 2]$. 

\begin{theorem}[\cite{Jain2024commonroot}]
  Given $f_1, \ldots, f_t \xleftarrow{\$} V \subset \F_p[x_1, \ldots, x_n]$ sampled independently, where $f_i: \F_p^n \mapsto \F_p$, $N = \left| \{a \in \F_p: f_i(a) = 0, 1 \leq i \leq t\}\right|$ follows Binomial distribution $\text{Bin}(p^n, 1/p^t)$. 
\end{theorem}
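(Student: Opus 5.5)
The plan is to reduce the statement to a statement about independent Bernoulli indicators, one per point of the affine space. Since the theorem counts common zeros over all $p^n$ inputs, I read the set in the definition of $N$ as $\{a\in\F_p^n : f_i(a)=0,\ 1\le i\le t\}$. I write
\[
N=\sum_{a\in\F_p^n} \mathbf 1_a, \qquad \mathbf 1_a=\prod_{i=1}^t [f_i(a)=0].
\]
The statement then follows once two facts are shown: each $\mathbf 1_a$ is Bernoulli with parameter $p^{-t}$, and the family $(\mathbf 1_a)_{a\in\F_p^n}$ is mutually independent. A sum of $p^n$ independent $\mathrm{Bernoulli}(p^{-t})$ variables is $\mathrm{Bin}(p^n,p^{-t})$ by definition.

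The key step is pinning down the sampling space $V$ so that evaluations behave like a uniformly random function. I take $V$ to be the space of reduced polynomials, that is, polynomials of degree at most $p-1$ in each variable $x_1,\dots,x_n$, since this is the setting intended in~\cite{Jain2024commonroot}. By multivariate Lagrange interpolation, the evaluation map
\[
\mathrm{ev}\colon V\to\F_p^{\F_p^n}, \qquad f\mapsto (f(a))_{a\in\F_p^n},
\]
is an $\F_p$-linear bijection. The dimensions agree, both being $p^n$, and injectivity holds because a reduced polynomial vanishing on all of $\F_p^n$ is zero; this follows by induction on $n$ using the univariate root bound. Consequently, a uniform $f\in V$ has a uniformly distributed evaluation vector. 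Its coordinates $f(a)$ are therefore independent and uniform on $\F_p$.

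Next I combine across the $t$ sampled polynomials. Because $f_1,\dots,f_t$ are independent, the whole array $(f_i(a))_{i\le t,\ a\in\F_p^n}$ consists of $tp^n$ independent uniform field elements. For a fixed $a$, this gives $\Pr[\mathbf 1_a=1]=\prod_i\Pr[f_i(a)=0]=p^{-t}$. Moreover, the indicators $\mathbf 1_a$ for distinct $a$ are functions of disjoint blocks of this array, namely the column indexed by $a$, so they are mutually independent. This yields the binomial law.

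The main obstacle is the hypothesis on $V$. If $V$ is instead a space of bounded total degree $d<p-1$, which is the relevant case for \design{}'s output polynomials, the evaluation map is no longer surjective and full independence fails. In that case I would fall back on the weaker but sufficient fact that for any $m$ points on which the monomials of $V$ are linearly independent (for instance any $m\le d+1$ points on a line, in the univariate case), the evaluations are jointly uniform. This gives $m$-wise independence of the $\mathbf 1_a$. It is enough to match the binomial moments $\mathbb E\binom{N}{j}=\binom{p^n}{j}p^{-tj}$ for $j\le m$, and hence to bound $\Pr[N\ge 2]$ as used afterwards. I would state the exact binomial claim under the reduced-polynomial model and cite~\cite{Jain2024commonroot} for the bounded-degree approximation.
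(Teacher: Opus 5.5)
The paper gives no proof of this statement; it cites \cite{Jain2024commonroot} and then uses only the special case $n=1$, $V=\F_p[x]_{\le p-1}$. Your argument is a correct, self-contained proof of exactly the version the paper needs. With $V$ the space of reduced polynomials, the evaluation map is an $\F_p$-linear bijection onto all functions $\F_p^n\to\F_p$. Hence the $tp^n$ values $f_i(a)$ are i.i.d.\ uniform, and the indicators $\mathbf{1}_a$ are independent Bernoulli$(p^{-t})$ variables, which gives $N\sim\mathrm{Bin}(p^n,p^{-t})$.

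Compared with the bare citation, your route does two useful things. First, it repairs the statement as written. The set must range over $a\in\F_p^n$, and $V$ cannot be an arbitrary subspace: for $V$ the constants, $N\in\{0,p^n\}$ and the claim fails. Second, it makes explicit that for bounded degree $d<p-1$ the exact binomial law is lost and only $(d+1)$-wise independence survives. That is precisely what the paper's separate union bound $\prob[N\ge 2]\le\binom{p}{2}p^{-2t}$ tacitly relies on, namely joint uniformity of $(f(a),f(b))$ for $a\neq b$, which holds once $d\ge 1$.

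There is one small wording slip in your fallback. The condition you need is not that ``the monomials of $V$ are linearly independent on the $m$ points.'' That is impossible when $m<\dim V$, for instance $m<d+1$ in the univariate case. What you need is that the $m$ point-evaluation functionals are linearly independent on $V$, i.e.\ $f\mapsto(f(a_1),\dots,f(a_m))$ is surjective onto $\F_p^m$. Your example is fine under the corrected condition. For univariate degree $\le d$ it holds for any $m\le d+1$ distinct points by Vandermonde. In several variables with total degree $\le d$ it also holds for any $m\le d+1$ distinct points, since products of coordinate-based separating linear factors give a Lagrange basis of degree $m-1$.
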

Note that~\cite{Jain2024commonroot} proves this for an arbitrary finite field $\F_p$. We need a special case of this theorem, namely when $n = 1$ and $V = \F_p[x]_{\leq p-1}$. Then, from the above theorem, it follows that $N \sim \text{Bin}(p, p^{-t})$. Hence, we get 
\[
\prob[N \geq 2] = \frac{1}{2}p^{2-2t}(1 + \mathcal{O}(p^{-1}))
\]
for $t \geq 2$. For arbitrary $d < p-1$, the bound on $\prob[N \geq 2]$ is slightly different. For a uniformly random polynomial $f$ we have
$\prob[f(a) = f(b) = 0] = p^{-2}$ for any distinct $a, b \in \F_p$. Since $t$ polynomials are independent, $\prob[f_i(a) = f_i(b) = 0, \forall i] = p^{-2t}$, Applying union bound over $\binom{p}{2}$ pairs we get
\[
\prob[N \geq 2] \leq \binom{p}{2} \cdot p^{-2t}
\]
A special case occurs when $d=1$, since a non-zero polynomial with $d=1$  can not have more than one root. So, $N \geq 2$, if every polynomial $f_i$ is a zero polynomial. Hence, $\prob[N \geq 2] \leq p^{-2t}$.

Assuming that the output polynomials for \design{} behave as in the above discussion, the probability of finding collision is extremely low. Our experimental results with the \design{} show (for $31$-bit prime and reduced rounds -- 3 to 6) that we never find two common zeros corresponding to the output polynomials. This indicates that the actual probability is indeed extremely small for the choice of field. 
Recent cryptanalysis~\cite{sbox-skipping} of \poseidon{} also shows a behaviour consistent with the above description.

\subsection{Choosing the Number of Rounds}
\label{subsection:cryptanalysissummary}
Let \(R_\mathrm{stat}\) and \(R_\mathrm{alg}\) be the number of rounds required by our statistical and algebraic analysis respectively; the number of rounds would be given as \(R_\mathrm{min} = \ceil{\max(R_\mathrm{stat}, R_\mathrm{alg})}\). Considering the discussion in \Cref{rem:interpolation-degree}, in particular the fact that $R_\mathrm{min}  < \floor{\log_d p}$, we define 
\[
R = \floor{\log_d p} = \max\left\{ R \mid d^R \le p \right\},
\]
letting the degree \(d^R\) of \cref{lem:degree-after-r-rounds} grow until it saturates the field. 
This gives \(R = 13, 11, 22\) for KoalaBear/Mersenne31, BabyBear, and Goldilocks respectively. 
\cref{table:roundnumbers} summarizes the parameters of \design{} w.r.t. the chosen prime.

\section{Practical Evaluation}
\label{section:benchmarks}
\subsection{Hardware Performance}
To quantify where our hash function stands against other circuit-friendly hash functions, we implement it in four instantiations on an FPGA platform and compare two of them with their \poseidon{} and \poseidontwo{} counterpart. We implement \design{} for the Goldilocks, Mersenne31, BabyBear, and KoalaBear fields, and both \poseidon{} and \poseidontwo{} for the Goldilocks and BabyBear fields using the parameters from Plonky3\footnote{\url{https://github.com/Plonky3/Plonky3}}, a framework for generating Polynomial Interactive Oracle Proofs (PIOPs). We target the AMD Virtex UltraScale+ HBM VU47P FPGA available on AWS and use Vivado 2025.1 for synthesis and implementation.

\subsubsection{Performance Comparison.} The resource utilization and timing results obtained in Vivado's out-of-context mode are presented in \cref{tab:hash_fpga}. Resource utilization is reported as the number of Look-Up Tables (LUTs), the FPGA's configurable combinational logic units; Flip-Flops (FFs), its register bits; Digital Signal Processors (DSPs), hard-wired multipliers from which we build the field multipliers; and Block RAMs (BRAMs), larger but slower on-chip memories.

Within the \design{} family, the Mersenne31 instance achieves both the highest throughput and the lowest resource utilization. This is expected: its modular reduction is the most efficient of the four fields, so fewer pipeline stages are needed to reach a high clock frequency and fewer resources are used overall. The Goldilocks instance has the highest latency, but its throughput is relatively close to that of Mersenne31, again thanks to an efficient modular reduction. Its resource utilization, however, is significantly higher than that of the other instances, because the field requires 64-bit rather than 31-bit arithmetic, which substantially increases the DSP count. The KoalaBear instance reaches a higher clock frequency than the BabyBear instance, on par with Mersenne31, at the cost of more clock cycles and more resources. The BabyBear instance runs at a lower frequency because of its more complex modular reduction. As was done for KoalaBear, the frequency of the BabyBear instance could be increased by adding pipeline stages, at the cost of a higher LUT/FF utilization.

We first compare \design{} with \poseidon{}. For Goldilocks, \design{} uses considerably more resources, namely 69.3\% more LUTs, 64.0\% more FFs, and 104.7\% more DSPs, but it needs about 19.7\% fewer clock cycles. In terms of performance, \design{} achieves a 3.1\% higher steady-state throughput and, with a latency of 3.427 $\mu$s versus 4.401 $\mu$s, a 22.1\% lower latency than \poseidon{}.

For BabyBear, the two implementations have effectively the same throughput: 398.6 MHz for \design{} and 399.7 MHz for \poseidon{}. Here, \design{} uses 32.9\% fewer LUTs and 26.6\% fewer FFs, at the cost of 24.8\% more DSPs. It also reduces the cycle count from 1,241 to 729 and the latency from 3.105 $\mu$s to 1.829 $\mu$s, i.e., 41.3\% fewer cycles and a 41.1\% lower latency, which corresponds to a 1.7x improvement in latency.

We now compare \design{} with \poseidontwo{}. For Goldilocks, \design{} again has a higher resource utilization, requiring approximately 113.3\% more LUTs, 92.5\% more FFs, and 104.7\% more DSPs. In terms of throughput, \design{} is slightly faster: 478.5 MHz versus 455.8 MHz, i.e., about 5.0\% higher. In terms of latency, \design{} requires 1,640 cycles (3.427 $\mu$s), compared with 2,132 cycles (4.678 $\mu$s) for \poseidontwo{}, i.e., a 26.7\% lower latency, which corresponds to a 1.36x improvement in latency.

For BabyBear, the resource utilization gap between \design{} and \poseidontwo{} is much smaller: \poseidontwo{} uses 10.6\% fewer LUTs, 12.9\% fewer FFs, and 3.3\% fewer DSPs, and it reaches a 4.2\% higher steady-state throughput.
In return for this modest increase in resource usage and slight reduction in throughput, \design{} reduces the cycle count from 1,636 to 729 and the latency from 3.939 $\mu$s to 1.829 $\mu$s. This corresponds to 55.4\% fewer cycles and a 53.6\% lower latency, i.e., a 2.15x improvement in latency over \poseidontwo{}.

\begin{table}[t]
  \centering
  \caption{Resource utilization and performance comparison of \design{}, and \poseidontwo{} hash functions.}
  \label{tab:hash_fpga}
  
  \begin{threeparttable}
  \renewcommand{\TPTminimum}{\textwidth}% notes width: the tabular is scaled by \resizebox, which threeparttable cannot see
  \setlength{\tabcolsep}{6pt}
  \renewcommand{\arraystretch}{1.2}

  \resizebox{\textwidth}{!}{%
  \begin{tabular}{l|l|l|l|l|l}
  \Xhline{1.1pt}
  \makecell{Hash \\ Function} & Field & \makecell[l]{Resource Utilization\\(LUTs/FFs/DSPs/BRAMs)$^{\ddagger}$} & \makecell{Freq$^{\dagger}$ \\ {[MHz]}} & \makecell{Clock \\ Cycles} & \makecell{Time\\ {[$\mu$s]}} \\
  \Xhline{1.1pt}
  \multirow{4}{*}{\makecell{\design{}}} & Goldilocks & 1,104,100/1,414,041/8,448/0 & 478.5 & 1,640 & 3.427 \\
  \cline{2-6}  
   & BabyBear & 446,731/631,601/2,112/0 & 398.6 & 729 & 1.829 \\
  \cline{2-6}
   & KoalaBear & 548,419/719,796/1,872/0 & 501.0 & 1,162 & 2.319 \\
  \cline{2-6}
   & Mersenne31 & 266,039/418,369/1,872/0 & 506.1 & 634 & 1.253 \\
  \hline\hline
  \multirow{2}{*}{\poseidon{}} & Goldilocks & 652,054/862,090/4,128/0 & 464.0 & 2,042 & 4.401 \\
  \cline{2-6}
   & BabyBear & 665,775/860,382/1,692/0 & 399.7 & 1,241 & 3.105 \\
  \hline\hline
  \multirow{2}{*}{\poseidontwo{}} & Goldilocks & 517,661/734,573/4,128/0 & 455.8 & 2,132 & 4.678 \\
  \cline{2-6}
   & BabyBear & 399,479/550,316/2,043/0 & 415.3 & 1,636 & 3.939 \\
  \hline
  \Xhline{1.1pt}
  \end{tabular}%
  }
  \begin{tablenotes}[flushleft]
    \footnotesize
    \item[$\dagger$] The frequency also represents the throughput in MHashes/s.
    \item[$\ddagger$] Look-Up Tables (LUTs) are the FPGA's configurable combinational compute units, Flip-Flops (FFs) provide the memory registers, Digital Signal Processors (DSPs) are high-performance 27-bit by 18-bit signed multipliers from which we build 32-bit multipliers using Karatsuba, and 64-bit multipliers using the schoolbook approach on the 32-bit Karatsuba multipliers, and Block RAMs (BRAMs) are larger but slower on-chip memories.
  \end{tablenotes}
  \end{threeparttable}
  \end{table}

\subsubsection{Resource Sharing with the NTT.}
The previous discussion concerned standalone, fully unrolled hash cores. We now evaluate dual-mode designs in which an NTT and a two-round hash tile share a single physical pool of field multipliers. For this evaluation, we use one fully pipelined $2^{12}$-point multipath delay-commutator (MDC) NTT per field, with parallelism 16 for BabyBear and 8 for Goldilocks, which matches the number of field elements in the hash state. This NTT size is common in FHE and ZK hardware, and such an NTT can serve as a building block for larger ones. For each of the two fields, we implement three dual-mode variants, each pairing the NTT with a two-round tile of \design{}, \poseidon{}, or \poseidontwo{}; a mode selected at runtime switches between the NTT and the hash datapath. Since a tile implements only two rounds, it must be invoked several times to compute a full hash. For simplicity, we restrict the analysis to the full and partial rounds of \poseidon{} and \poseidontwo{}, and to the main round of \design{}, i.e., its round function \(\roundfunction\) (\cref{defn:round-function}).

Let $M_{\mathrm{dual}}$ denote the number of field multipliers in a dual-mode design and $M_{\mathrm{NTT}}$ the number of field multipliers required by the standalone NTT. We define the \emph{multiplier overprovisioning} factor as $F_{\mathrm{mul}}=M_{\mathrm{dual}}/M_{\mathrm{NTT}}$, i.e., the factor by which the multiplier pool grows to support the hash, and the corresponding NTT utilization as $U_{\mathrm{NTT}}=1/F_{\mathrm{mul}}$, i.e., the fraction of the pool that the NTT uses. Adding hash support increases the pool from $M_{\mathrm{NTT}}=96$ to $M_{\mathrm{dual}}=128$ field multipliers for BabyBear and from $M_{\mathrm{NTT}}=48$ to $M_{\mathrm{dual}}=64$ for Goldilocks. The DSP count grows by the same factor of one third (each 31-bit multiplier uses three DSPs and each 64-bit multiplier twelve): from 288 to 384 for BabyBear and from 576 to 768 for Goldilocks. In both fields, therefore, $F_{\mathrm{mul}}=4/3$ and $U_{\mathrm{NTT}}=75\%$. Compared with the sum of the resources of the standalone NTT and of the standalone hash tile, the shared designs save 42.9\% of the DSPs, 21.7--25.9\% of the LUTs, and 25.8--30.3\% of the FFs.

The size of the shared multiplier pool alone does not indicate how fully, or how efficiently, the pool is used in hash mode, because not all multipliers are necessarily computing at all times. We therefore define the steady-state multiplier saturation as $S=M_{\mathrm{active}}/M_{\mathrm{dual}}$, where $M_{\mathrm{active}}$ is the number of multipliers that are actively computing in steady state and $M_{\mathrm{dual}}$ is, as above, the total number of multipliers in the pool. The \design{} tile applies its nonlinear map to every lane (i.e., every element) of the state in both unrolled rounds and therefore uses the entire pool, giving $M_{\mathrm{active}}=M_{\mathrm{dual}}$ and \(S=100\%\).

For \poseidon{} and \poseidontwo{}, the saturation depends on whether the tile executes two full rounds or two partial rounds. Each \(x^7\) S-box costs four field multiplications, and a full round applies the S-box to every lane. Over two full rounds, the tile therefore keeps \(2\cdot16\cdot4=128\) multipliers active for BabyBear (two rounds, 16 lanes, four multiplications per S-box) and \(2\cdot8\cdot4=64\) for Goldilocks (two rounds, 8 lanes), i.e., the entire pool in both cases, giving \(S_{\mathrm{full}}=100\%\).

A partial round instead applies the S-box to the first lane only. Over two partial rounds, the S-boxes therefore occupy \(2\cdot1\cdot4=8\) multipliers of the pool, which yields \(S_{\mathrm{partial}}=8/128=6.25\%\) for BabyBear and \(S_{\mathrm{partial}}=8/64=12.5\%\) for Goldilocks when only the S-box layer is accounted for. For \poseidon{}, whose linear layer is multiplier-free, these are the final values. The Goldilocks \poseidontwo{} tile has the same partial-round saturation of 12.5\%, because its internal matrix is multiplier-free as well. The BabyBear \poseidontwo{} tile, in contrast, additionally requires nine generic field multiplications per partial round for its internal matrix. Over two partial rounds, its active multiplier count is therefore \(M_{\mathrm{active}}=2\cdot(4+9)=26\) (four S-box multiplications plus nine internal-matrix multiplications per round), which yields \(S_{\mathrm{partial}}=26/128\approx20.3\%\).
\subsection{Software Performance}
Besides our hardware evaluation, we benchmark \design{} in a plain Rust implementation for 64-bit platforms and compare it with the relevant literature. Many implementations of arithmetization-friendly hash functions exist, which gives implementers flexibility but makes a fair comparison difficult, since the supported prime fields, the implemented hash functions, and the level of dedicated optimization vary greatly. We therefore implement \design{} in different software frameworks for a nuanced view of its performance.

First, we implement \design{} in Plonky3, instantiated with KoalaBear, BabyBear, Mersenne31, and Goldilocks. We compare it with implementations of \rescueprime{}, \rpo{}, \poseidontwo{}, \poseidon{}, and \monolith{} for all primes where implementations exist. For \poseidon{}, the developers of Plonky3 use the round numbers suggested in the \poseidontwo{} paper~\cite{DBLP:conf/africacrypt/GrassiKS23}. For MDS matrices over small primes, they employ low-norm circulant matrices which can be evaluated with Karatsuba-style convolution in the full rounds. In the partial rounds, they use the sparse matrix decomposition from the original \poseidon{} paper~\cite[Appendix~B]{DBLP:conf/uss/0001KR0S21}, which reduces the cost of evaluating the linear layer to $O(t)$.

The hash functions available in Plonky3 span inverse power maps, low-degree nonlinear functions, and split-and-lookup layers. \cref{tab:comparison_software_plonky3} lists the average performance of a single permutation call for state sizes $16$ and $24$ over the 31-bit primes and $8$ and $12$ over the 64-bit prime.
As expected, \design{} clearly outperforms the inverse power map designs \rescueprime{} and \rpo{} for all supported state sizes and primes. At the other end of the spectrum, split-and-lookup designs such as \monolith{} excel at plain performance, yet \design{} is faster for the larger Mersenne31 state: while the evaluation time of the nonlinear layers of \monolith{} scales linearly with the state size, its MDS matrices are significantly more expensive at larger state sizes. Results for \monolith{} over 31-bit primes are limited to Mersenne31, since the decomposition step in its Bars layer depends on the prime and KoalaBear and BabyBear were unsupported at the time of writing. Among the power map designs, \design{} outperforms \poseidon{} in all implemented instances but is faster than \poseidontwo{} only for the 16-element BabyBear state.
Overall, \design{} performs solidly against state-of-the-art constructions.

An additional comparison over Goldilocks in the \enquote{ZK-friendly Hash Zoo} is provided in \cref{appendix:performance}. The library supports a broader selection of hash functions, though most small-prime instances are only available over the Goldilocks field.
\begin{table}[t]
  \centering
  \caption{Software performance of a single permutation call in the Plonky3 framework, for state sizes of approximately 512 and 768 bits.
  The reported results are given in nanoseconds, using 100 samples per benchmark.}\label{tab:comparison_software_plonky3}
  \resizebox{\linewidth}{!}{
  \begin{threeparttable}
\begin{tabular}{l|cc|cc|cc|cc}
\toprule
\multicolumn{1}{c}{} & \multicolumn{2}{c}{KoalaBear} & \multicolumn{2}{c}{BabyBear} & \multicolumn{2}{c}{Mersenne31} & \multicolumn{2}{c}{Goldilocks} \\
\cmidrule(lr){2-3}\cmidrule(lr){4-5}\cmidrule(lr){6-7}\cmidrule(lr){8-9}
\multicolumn{1}{c}{Hash} & ($t = 16$) & ($t = 24$) & ($t = 16$) & ($t = 24$) & ($t = 16$) & ($t = 24$) & ($t = 8$) & ($t = 12$) \\
\midrule
\rescueprime{} & 9\,226 & 14\,945 & 9\,535 & 15\,278 & 4\,571 & -- & 7\,433 & 11\,155 \\
\poseidon{}\tnote{a} & 1\,006 & 2\,550 & 1\,088 & 2\,786 & 827 & -- & 737 & 1\,111 \\
\poseidontwo{} & \textbf{602} & \textbf{880} & 759 & \textbf{1\,228} & 501 & \textbf{859} & 480 & 669 \\
\monolith{} & -- & -- & -- & -- & \textbf{349} & 1\,867 & \textbf{257} & \textbf{388} \\
\rpo{} & -- & 4\,927 & -- & 5\,147 & -- & 6\,211 & -- & 3\,836 \\
\midrule
\design{} & 921 & 1\,461 & \textbf{730} & 1\,376 & 671 & 971 & 580 & 1\,063 \\
\bottomrule
\end{tabular}
\begin{tablenotes}[flushleft]
  \footnotesize
  \item[a] For \poseidon{}, we use the round numbers suggested in the \poseidontwo{} paper~\cite{DBLP:conf/africacrypt/GrassiKS23}, since the original \poseidon{} instances were not designed for these small prime fields.
\end{tablenotes}
\end{threeparttable}}
\end{table}
\subsection{Proof System Performance}
As an additional benchmark, we implement and evaluate the arithmetization of \design{} in the Plonky3 framework, using its univariate STARK with a FRI-based 2-adic polynomial commitment scheme over the BabyBear, KoalaBear, and Goldilocks prime fields. Plonky3 provides an Algebraic Intermediate Representation (AIR) language to encode a computation trace into constraints. We pack the entire trace of 8 hash calls into a single row of the trace matrix and create $2^{12}$ such rows for a compact proof of $2^{15}$ hash evaluations. This structure batches many hash evaluations into a single proof to benchmark amortized performance, where proof-system performance is dictated primarily by the trace matrix size and the degree of intermediate algebraic relations.

We use two arithmetization variants of \design{}. The first is a direct implementation as an arithmetic circuit. The second introduces intermediate witnesses in each branch of the \beneswritten{} layer to reduce the maximum degree, which for both the function choice in our invertible \beneswritten{} networks brings the maximum degree down to three. The first arithmetization uses fewer trace cells and has reduced verification time, while the second one reduces prover time. We denote the small and fast variants by \design$_s$ and \design$_f$, respectively.
\cref{tab:proof_system_performance} gives the number of trace cells per permutation, amortized prover time, and verification latency for proofs of $2^{15}$ permutation calls of \design{} with its two variants, \poseidon{}, \poseidontwo{}, and \monolith{} for the selected primes and both state sizes. The intermediate witnesses moderately reduce prover time, at the cost of more trace cells and higher verification latency.
\poseidontwo{} outperforms both variants of \design{} in prover and verification time for every tested field and state size. \design$_s$ uses fewer trace cells than \poseidon{} and \poseidontwo{} in BabyBear, but more in KoalaBear and Goldilocks. In Goldilocks and BabyBear, \poseidon{} and \poseidontwo{} keep the maximum degree at 3 with one intermediate register per degree-7 power map; \design$_f$ uses two registers per S-box on average, matching the intermediate state count of a \poseidon{} full round while representing two bivariate polynomials in our \beneswritten{} network.

\begin{table}[h]
  \centering
  \caption{Proof system performance for proofs of $2^{15}$ permutation calls. Cells denotes trace cells per permutation; prover time is amortized per permutation, and verifier time is per proof. The upper and lower panels use $t=16$ and $t=24$ for KoalaBear and BabyBear, and $t=8$ and $t=12$ for Goldilocks. The first row of \design{} uses no intermediate witnesses; the second uses the fast arithmetization.}\label{tab:proof_system_performance}
\begin{minipage}{\linewidth}
  \bigskip
\textit{KoalaBear/BabyBear: $t=16$; Goldilocks: $t=8$}\par\smallskip
\resizebox{\linewidth}{!}{%
\begin{tabular}{lrrr rrr rrr}
\toprule
\multirow{3}{*}{Construction} & \multicolumn{3}{c}{KoalaBear} & \multicolumn{3}{c}{BabyBear} & \multicolumn{3}{c}{Goldilocks} \\
\cmidrule(lr){2-4}\cmidrule(lr){5-7}\cmidrule(lr){8-10}
 & \multirow{2}{*}{Cells} & Prover & Verifier & \multirow{2}{*}{Cells} & Prover & Verifier & \multirow{2}{*}{Cells} & Prover & Verifier \\
 &  & ($\mu$s/perm.) & (ms/proof) &  & ($\mu$s/perm.) & (ms/proof) &  & ($\mu$s/perm.) & (ms/proof) \\
\midrule
\multirow{2}{*}{\design{}} & 224 & 136 & 15.8 & \textbf{192} & 121 & 14.0 & 184 & 125 & 9.53 \\
 & 432 & 122 & 24.5 & 368 & 104 & 21.2 & 360 & 113 & 13.3 \\
\midrule
\poseidon{}\textsuperscript{a} & \textbf{164} & 60.0 & 7.21 & 298 & 92.0 & \textbf{10.6} & \textbf{180} & 63.0 & 6.60 \\
\poseidontwo{} & \textbf{164} & \textbf{49.8} & \textbf{7.11} & 298 & \textbf{82.8} & \textbf{10.6} & \textbf{180} & \textbf{59.7} & \textbf{6.29} \\
\monolith{} & \multicolumn{3}{c}{--} & \multicolumn{3}{c}{--} & 3536 & 1170 & 14.5 \\
\bottomrule
\end{tabular}%
}

\bigskip

\textit{KoalaBear/BabyBear: $t=24$; Goldilocks: $t=12$}\par\smallskip
\resizebox{\linewidth}{!}{%
\begin{tabular}{lrrr rrr rrr}
\toprule
\multirow{3}{*}{Construction} & \multicolumn{3}{c}{KoalaBear} & \multicolumn{3}{c}{BabyBear} & \multicolumn{3}{c}{Goldilocks} \\
\cmidrule(lr){2-4}\cmidrule(lr){5-7}\cmidrule(lr){8-10}
 & \multirow{2}{*}{Cells} & Prover & Verifier & \multirow{2}{*}{Cells} & Prover & Verifier & \multirow{2}{*}{Cells} & Prover & Verifier \\
 &  & ($\mu$s/perm.) & (ms/proof) &  & ($\mu$s/perm.) & (ms/proof) &  & ($\mu$s/perm.) & (ms/proof) \\
\midrule
\multirow{2}{*}{\design{}} & 336 & 205 & 23.8 & \textbf{288} & 179 & 20.7 & 276 & 211 & 13.2 \\
 & 648 & 183 & 37.3 & 552 & 157 & 32.1 & 540 & 182 & 20.2 \\
\midrule
\poseidon{}\textsuperscript{a} & \textbf{239} & 89.4 & \textbf{9.04} & 450 & 143 & \textbf{14.5} & \textbf{248} & 88.5 & \textbf{7.57} \\
\poseidontwo{} & \textbf{239} & \textbf{70.8} & 9.09 & 450 & \textbf{124} & \textbf{14.5} & \textbf{248} & \textbf{83.9} & 7.64 \\
\monolith{} & \multicolumn{3}{c}{--} & \multicolumn{3}{c}{--} & 3564 & 1190 & 14.4 \\
\bottomrule
\end{tabular}%
}
\par\smallskip
{\footnotesize\raggedright
  \textsuperscript{a} For \poseidon{}, we use the round numbers suggested in the \poseidontwo{} paper~\cite{DBLP:conf/africacrypt/GrassiKS23}. The original \poseidon{} instances were not designed for these small prime fields.
\par}
\end{minipage}
\end{table}

\newpage
\bibliographystyle{splncs04}
\bibliography{./bibliography/references,./bibliography/extra}

%
% ---- Bibliography ----
%
% BibTeX users should specify bibliography style 'splncs04'. 
% References will then be sorted and formatted in the correct style.
%

\appendix
\renewcommand{\thesection}{A\arabic{section}}
\crefalias{section}{appendix}
\crefalias{subsection}{appendix}
\renewcommand{\theHsection}{appendix.\arabic{section}}
\renewcommand{\theHsubsection}{appendix.\arabic{section}.\arabic{subsection}}
\section*{Appendix}
\section{Proofs for Statistical Analysis}
\label{appendix:proofs-statistical}

This section collects the proofs of the results stated in
\cref{subsection:statisticalattacks}.

\begin{proof}[\cref{lemma:benes_block_diff_prob}]
Consider the system of equations defining the output differences, i.e.,
\begin{align}
    f_0(x_0 + \Delta x_0) - f_0(x_0) + f_2(x_1 + \Delta x_1) - f_2(x_1) &= \Delta y_0 \label{eq:diff1},\\
    f_1(x_0 + \Delta x_0) - f_1(x_0) + f_3(x_1 + \Delta x_1) - f_3(x_1) &= \Delta y_1 \label{eq:diff2}.
\end{align}

Assume \(\Delta x_0 \neq 0\) (the case \(\Delta x_0 = 0, \Delta x_1 \neq 0\) follows analogously by using \cref{eq:diff1} as a polynomial in \(x_1\) with \(f_2\) in place of \(f_0\)). The polynomial expression \(g(x_0) = f_0(x_0 + \Delta x_0) - f_0(x_0)\) evaluates to a polynomial of degree \(d-1\). For any arbitrarily chosen, fixed value of \(x_1 \in \F_p\), \cref{eq:diff1} simplifies to \(g(x_0) = c \in \F_p\). Because \(\deg(g) \leq d-1\), this equation has at most \(d-1\) roots in the field. Since there are \(p\) possible choices for \(x_1\), there are at most \(p \cdot (d-1)\) valid pairs \((x_0, x_1)\) out of \(p^2\) total possible inputs. Thus, the maximum differential probability is bounded by \(\frac{p(d-1)}{p^2} = \frac{d-1}{p}\).
\end{proof}

\begin{proof}[\cref{prop:mds_ext_field}]
Let \(B_\mathrm{in}\) and \(B_\mathrm{out}\) denote the number of active \(\F_{p^2}\) words at the input and output of the MDS matrix. By the definition of the MDS property over \(\F_{p^2}\), any nonzero transition guarantees that the branch number is \(l+1\), meaning \(B_\mathrm{in} + B_\mathrm{out} \geq l+1\). 

Since one \(\F_{p^2}\) element precisely corresponds to one \beneswritten{} block, an active \(\F_{p^2}\) word implies an active \beneswritten{} block. Therefore, the number of active blocks over two rounds is directly given by the branch number \(l+1\). Since the differential probability of the \beneswritten{} block over \(\F_p^2\) remains \(\frac{d-1}{p}\) regardless of the field representation, the maximum \(2\)-round differential probability is strictly bounded by \(\left(\frac{d-1}{p}\right)^{l+1}\).
\end{proof}

\section{Proofs for Algebraic Analysis}
\label{appendix:proofs-algebraic}

This section collects the proofs of the results stated in
\cref{subsection:algebraicattacks}.

\subsection{Quotient Ring Dimension for \design{}}
\label{appendix:ideal-degree-analysis}
\label{subsubsec:gb-analysis}
Let's consider the \design{} permutation both in Sponge and feedforward mode. Let's consider \(n = 2l\) branches and 
let's focus on the $k$-out-of-n instantiation, that is when the output is composed of $k$ elements 
(in feedforward constructions it translates to cropping the output to $k$ elements). 

Let \(c\) denote the number of field elements in the capacity. When the primitive is used in Sponge mode \(c > 0\), we must set other \(n-k-c\) inputs to random values in order to obtain a determined polynomial system. 
When \design\ is used in feedforward mode (e.g. \trunc), there is no capacity (\(c = 0\)) and we must set \(n-k\) inputs to random values in order to obtain a determined polynomial system. Instead of setting those inputs to random values, one could also add additional $n-k-c$ linear relations, either random or chosen by exploiting the structure of the linear layers (e.g.~as happened in round-skipping attacks~\cite{DBLP:journals/iacr/MerzG26}). 
As a result, in both cases, we obtain \(n-k\) fixed inputs and \(k\) fixed outputs. Once this is done, we can proceed with solving the polynomial
system arising from the following CICO-($n$-$k$,$k$) problem~\cite[Algorithm 4]{report-stark-friendly}.  

\begin{definition}[CICO-($n$-$k$,$k$) Problem]
    \label{def:cico-problem-nk-k}
    Let $F: \F_p^n \rightarrow \F_p^n$ be a function and let $k < n$ be an integer. 
    The CICO-($n$-$k$,$k$) problem consists of finding $x \in \F_p^{k} ,y \in \F_p^{n-k}$ such that
    $F(x || \vars{0}^{n-k}) = (y || \vars{0}^{k})$.
\end{definition}

\paragraph{Polynomial Model for \design.}
Starting from the naive polynomial model for \design{}, we are going to modify it in order to remove $n-k$ variables per round. This approach is similar to the technique used in~\cite{arion} for the analysis of Arion.

Let's consider the initial polynomial model for \design{}. Let $R$ be the number of rounds, for $0 \leq i \leq R-1$ we denote with $\vars{x}^{(i)} = \left( x_0^{(i)}, \dots, x_{n-1}^{(i)} \right)$ the inputs to the \((i+1)\)-th round where each round, by specification, is composed of a nonlinear layer \(\mathcal{F}\) and a linear layer \(\llayer\). Let's denote as $\vars{o}$ the output values of the instantiation. Given a valid hash $\vars{b} \in \F_p^k$ produced by \design, we replace \(\vars{o}\) with \(\vars{b}\). The outputs of the initial linear layer (\(M \circ \mathcal{L}(\vars{x})\) as defined in \Cref{defn:design}) are exactly \(\vars{x}^{(0)}\), which are also the inputs to the first round. The original inputs are denoted as \(\vars{a}\).

To build our polynomial modelling we assign weights to the variables as follows: for every $0 \le i \le R-1$, we assign weight \(d^{i}\) to the variables \(\vars{x}^{(i)}\), where $d$ is the maximum degree of the nonlinear layer functions.

\noindent\textbf{The initial affine layer.} Remember that, due to the CICO problem, $n-k$ inputs are fixed. Using \(\overline{\mathcal{L}} = M \circ \mathcal{L}(\cdot)\), \(\overline{\mathcal{L}}^{-1}(\vars{x}^{(0)}) - \vars{a} = 0\). Of those equations, consider only the ones involving \(\vars{a}_{k,n-1}\) which are the $n-k$ fixed inputs. As a result, by Gaussian elimination, we can express \(\vars{x}^{(0)}_{0:n-k-1}\) as a linear combination of the \(k\) unknowns \(\vars{x}^{(0)}_{n-k:n-1}\).  After the initial affine layer, our polynomial model contains 
        \[
            s_{j}^{(0)} := x_j^{(0)} - \ell\left(\vars{x}^{(0)}_{n-k:n-1}\right) \ \text{for} \ 0 \le j \le n-k-1.
        \]
        \noindent\textbf{The round polynomial modelling.}
         Consider the affine layer application $\llayer$. Without losing generality, we use $\llayer(\cdot)$ to denote the pairwise interpretation of the nonlinear layer outputs as elements in \(\F_{p^2}\), the application of the MDS $\llayer$ and the subsequent projection into elements in \(\F_p\), as described in \Cref{defn:round-function}. As it was also explained in \Cref{section:designrationale,section:design}, \(\llayer\) independently mixes the even and odd coordinates of the input. Because the technique that follows is not affected by this behaviour, w.l.o.g. we can assume that all coordinates are mixed together. Indeed, the same methodology can, in a completely analogous way, be applied to any SPN-like design. 

        We model each round by integrating the nonlinear layer \(\nllayer\) and the subsequent linear layer \(\llayer\) into a single polynomial system: 
        \[ \widetilde{\mathcal{F}}^{(i)} = \mathcal{F}^{(i)}(\vars{x}^{(i-1)}) - \llayer^{-1}\left(\vars{x}^{(i)}\right) \ \ \text{ for } 1 \le i \le R. \]
        Consider a generic round $i$. From the previous round ($i-1$), we obtained $n-k$ equations $\vars{s}_{1:n-k}^{(i-1)}$ of the form $s_{j}^{(i-1)} := x_j^{(i-1)} - g_j^{(i-1)}\left(\vars{x}^{(i-1)}_{n-k:n-1}, \vars{x}^{(i-2)}_{n-k:n-1}\right)$ for $0 \le j \le n-k-1$, where $g_j^{(i-1)}$ here denotes generic (even nonlinear) polynomials. When $i = 1$, the equations $\vars{s}_{1:n-k}^{(0)}$ are the ones obtained from the initial affine layer. 
        Those equations can be used to remove the $n-k$ variables $\vars{x}^{(i-1)}_{0:n-k-1}$ from the polynomial system modelling the $i$-th round.
        Note that the leading monomials of equations \(\widetilde{\mathcal{F}}^{(i)}_{n-k:n-1}\), namely \((x_{n-k}^{(i-1)})^d, \dots, (x_{n-1}^{(i-1)})^d\), are not affected by this variable removal step. 

        Equations \(\widetilde{\mathcal{F}}^{(i)}_{0:n-k-1}\) are modified by replacing the variables \(\vars{x}^{(i-1)}_{0:n-k-1}\) with the corresponding polynomials \(g_j^{(i-1)}\left(\vars{x}^{(i-1)}_{n-k:n-1}, \vars{x}^{(i-2)}_{n-k:n-1}\right)\). Since the variables \(\vars{x}^{(i-1)}_{n-k:n-1}\) are involved in \(g_j^{(i-1)}\) with degree 1, terms like \((x_{n-k}^{(i-1)})^d, \dots, (x_{n-1}^{(i-1)})^d\) are generated. However, these are exactly the leading monomials of equations \(\widetilde{\mathcal{F}}^{(i)}_{n-k:n-1}\) and therefore those terms can be replaced by \(\widetilde{\mathcal{F}}^{(i)}_{j} - \lm\left( \widetilde{\mathcal{F}}^{(i)}_{j} \right)\) for \(n-k \le j \le n-1\).

        Consider the modified equations \(\widetilde{\mathcal{F}}^{(i)}_{0:n-k-1}\) after the previous step. Order the monomials such that the monomials \(\vars{x}^{(i)}\) of degree 1, arising from the application of the inverse affine layer, appear first. Under this ordering, we construct the associated coefficient matrix, whose first \(n\) columns are indexed by the variables \(\vars{x}^{(i)}\). Since the matrix has $n-k$ rows, reducing it to row-echelon form and expressing the resulting rows as polynomials yields a new system of equations of the form
        $s_{j}^{(i)} := x_j^{(i)} - g_j^{(i)}\left(\vars{x}^{(i)}_{n-k:n-1}, \vars{x}^{(i-1)}_{n-k:n-1}\right)$ for $0 \le j \le n-k-1$, where $g_j^{(i)}$ here denotes again generic nonlinear polynomials.
        Equations \(\widetilde{\mathcal{F}}^{(i)}_{n-k:n-1}\) also involve the variables \(\vars{x}^{(i)}_{0:n-k-1}\) with degree 1. Those variables can be replaced with the corresponding polynomials \(g_j^{(i)}\left(\vars{x}^{(i)}_{n-k:n-1}, \vars{x}^{(i-1)}_{n-k:n-1}\right)\) for \(0 \le j \le n-k-1\). 
        The final polynomial system contains the just modified equations \(\widetilde{\mathcal{F}}^{(i)}_{n-k:n-1}\). The equations \(s_{j}^{(i)}\) are used for the variable removal step of the next round.

        Thanks to the variable removal steps and thanks to the variable weights we have chosen before, the equations \(\widetilde{\mathcal{F}}^{(i)}_{n-k:n-1}\) contain the terms \[
         \widetilde{\mathcal{F}}^{(i)}_{j} = \left(x^{(i-1)}_j\right)^d  + \sum_{\substack{d_1+\dots+d_k=d\\(d_1,\dots,d_k)\neq(\vars{0}^{n-k-j},d,\vars{0}^{n-j})}} \prod_{z=n-k}^{n-1} \left(x^{(i-1)}_z\right)^{d_z}
          \ \ \text{ for } n-k \le j \le n-1.
        \] Considering the chosen weights, those terms appear first, meaning that the first monomials of those equations are the monomials of degree \(d\) in the variables \(\vars{x}^{(i-1)}_{n-k:n-1}\). As a result, \(\deg\left(\widetilde{\mathcal{F}}^{(i)}_{j}\right) = d\) for \(n-k \le j \le n-1\).
        
    \noindent\textbf{The last round.} 
    The final round is treated analogously, with the distinction that the transformation is applied in the forward direction ($\llayer$ is applied rather than its inverse) and only the equations corresponding to the $k$ known (and fixed) output values are retained.
        In the case of Sponge mode, those equations are
        \[ \widetilde{\mathcal{F}}^{(r)}_{0:k-1} = \llayer\left(\mathcal{F}^{(r)}\right)_{0:k-1} - \vars{b} \] 
        whilst in feedforward mode they are
        \[ \widetilde{\mathcal{F}}^{(r)}_{0:k-1} = \llayer\left(\mathcal{F}^{(r)}\right)_{0:k-1}  + \vars{a}_{0:k-1} - \vars{b}. \] 
        The variable removal step proceeds as before. Note that in the feedforward case, even the variables \(\vars{x}^{(0)}_{0:k-1}\) must be replaced by using the replacement equations \(s_{j}^{(0)}\) obtained from the initial affine layer. 
        
\begin{simpleframe}{\textbf{The final polynomial system}}
    Let \(R\) denote the number of rounds and $k$ the number of constrained outputs. After the variable removal steps, we obtain a polynomial system with $Rk$ equations of total degree $d$ ($k$ equations per round) and $Rk$ variables.
\end{simpleframe}

\subsection{Bounding the Quotient Ring Dimension}

Before proceeding with the quotient ring dimension analysis, we need to recall some theorems and definitions related to B{\'e}zout bounds.

\begin{theorem}[B{\'e}zout's Theorem]
    \label{th}
    Let \(I = \langle f_0, \dots, f_{m-1} \rangle \subset \F_p[x_0, \dots, x_{m-1}]\) be a zero-dimensional ideal and let \(d_i = \deg(f_i)\) denote the total degree of $f_i$, for $i = 0, \dots, m-1$. If the system has a finite number of solutions (following from the finiteness theorem and zero-dimensional ideal), then the number of solutions (counting multiplicities) over the algebraic closure \(\overline{\F_p}\) is at most the product of the degrees of the polynomials, i.e., \(\prod_{i=0}^{m-1} d_i\).
\end{theorem}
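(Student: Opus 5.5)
The plan is to pass to projective space, where Bézout-type counts are cleanest, and then relate the affine count to the projective one. Write $K=\overline{\F_p}$. Since $I$ is zero-dimensional, the number of solutions counted with multiplicity equals $\dim_K K[x_0,\dots,x_{m-1}]/I$. This is the sum of the local multiplicities $\dim_K (K[\vars{x}]/I)_{\mathfrak m_P}$ over the finitely many points $P\in V(I)\subset K^m$. So the claim is equivalent to bounding this quotient ring dimension, which is also the quantity used in \cref{prop:quot-ring-dim-CICO}, by $\prod_i d_i$.

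First, I will homogenize each $f_i$ with a new variable $x_m$ to obtain forms $F_i\in K[x_0,\dots,x_m]$ of degree $d_i$. This gives $m$ hypersurfaces $H_i=V(F_i)\subset\mathbb P^m$. Every affine solution $P$ embeds as the point $[P:1]$ of $H_0\cap\dots\cap H_{m-1}$. Because $V(I)$ is finite, each such point is an isolated, i.e.\ zero-dimensional, irreducible component of the intersection. Its local intersection multiplicity agrees with the affine local multiplicity above, since dehomogenizing at $x_m=1$ identifies the local rings.

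The key step is the refined Bézout inequality (Fulton, \emph{Intersection Theory}, Ex.~12.3.1 or Thm.~12.3). For hypersurfaces $H_0,\dots,H_{m-1}$ in $\mathbb P^m$, the sum of the intersection multiplicities over the distinguished varieties, and in particular over the isolated points, is at most $\prod_i \deg H_i$. The other distinguished components, for example positive-dimensional components lying at infinity $x_m=0$, contribute nonnegatively and can only decrease the affine share. Summing only over the affine points $[P:1]$ therefore gives $\sum_P \mathrm{mult}_P\le\prod_i d_i$, which is the statement.

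The main obstacle is precisely these components at infinity. When the top-degree parts of the $f_i$ have a common nontrivial zero, the $F_i$ are not a regular sequence and the naive Hilbert-polynomial argument does not apply. If I want to avoid invoking Fulton, I would try a deformation argument instead. Replace $f_i$ by $f_i+t\,g_i$, where $g_i=x_i^{d_i}$, over $K(t)$ or $K[[t]]$. For the deformed system the leading forms $\mathrm{top}(f_i)+t\,x_i^{d_i}$ have no common zero at infinity for generic $t$. Hence the homogenizations form a regular sequence, and the Hilbert series computation $\prod_i(1-z^{d_i})/(1-z)^{m+1}$ gives exactly $\prod_i d_i$ solutions. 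I would then use upper semicontinuity of fibre length in the finite part of the family near $t=0$: each isolated point $P$ of the special fibre absorbs at least $\mathrm{mult}_P$ points of the generic fibre. This gives the bound again, and making this specialization step rigorous is where I expect the real work to lie.
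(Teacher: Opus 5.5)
The paper gives no proof of this statement. It quotes Bézout's theorem as a classical fact and adds, with a citation, only that the bound is attained exactly when there are no solutions at infinity. So your proposal supplies an argument where the paper supplies a reference, and your main route is valid:
\begin{itemize}
\item Every affine solution is an isolated point of $H_0\cap\dots\cap H_{m-1}\subset\mathbb{P}^m$, since a positive-dimensional component through $[P:1]$ would meet the affine chart in infinitely many points.
\item The number $\prod_i d_i=\int c_m\bigl(\bigoplus_i\mathcal{O}(d_i)\bigr)$ splits into contributions of the distinguished varieties of the zero scheme of $s=(F_0,\dots,F_{m-1})$.
\item Each contribution is nonnegative because $\bigoplus_i\mathcal{O}(d_i)$ is globally generated, and the isolated affine points contribute their multiplicities.
\end{itemize}
The naive homogenize-and-take-the-Hilbert-series argument already gives equality when there are no solutions at infinity, which is the situation the paper relies on. Your route buys the inequality in the degenerate case you correctly single out. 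That case really occurs: $x_0x_1-1$, $x_0x_2-2$, $x_0^2-3$ have two affine solutions (for $p>3$) but a whole line $\{x_0=x_3=0\}$ of common zeros at infinity. This also shows that the paper's side remark, that homogenization adds only finitely many points at infinity, is false in general.

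Two points need tightening.

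\emph{The version of Fulton you cite.} The refined Bézout theorem in its usual form (Fulton, Thm.~12.3) bounds $\sum_j\deg Z_j$ with each component counted once. That is too weak for a count with multiplicities. You need the weighted form that comes from the canonical decomposition and its positivity. You also need to identify the contribution of an isolated point $P$ with the length of $\mathcal{O}_{\mathbb{P}^m,P}/(F_0,\dots,F_{m-1})$. That identification uses that $\mathcal{O}_{\mathbb{P}^m,P}$ is Cohen--Macaulay, so the $F_i$ form a regular sequence there. Dehomogenization alone only identifies the two quotient rings; it does not equate Fulton's intersection multiplicity with their length.

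\emph{The deformation fallback.} The tool you name points the wrong way. Upper semicontinuity of fibre length says the special fibre may be \emph{longer} than nearby generic fibres. For example, $K[[t]][x]/(x^2,tx)$ has special fibre of length $2$ and generic fibre of length $1$. That is the opposite of ``$P$ absorbs at least $\mathrm{mult}_P$ generic points''. What you need is conservation of number:
\begin{itemize}
\item Over the Henselian base $K[[t]]$, quasi-finiteness at the points of $V(I)$ lets you split off a finite part $\operatorname{Spec}B$ containing the whole special fibre.
\item $B$ is flat. At each such point the total space is cut out by $m$ equations in a regular scheme of dimension $m+1$ and has zero-dimensional fibre. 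Hence it is a complete intersection, hence Cohen--Macaulay, so $t$ is a nonzerodivisor.
\item Then $\dim_K K[\mathbf{x}]/I=\operatorname{rank}B\le\dim_{K((t))}K((t))[\mathbf{x}]/(f_i+t\,x_i^{d_i})_i=\prod_i d_i$, and the argument closes.
\end{itemize}
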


The number of solutions is exactly the quotient ring dimension that we want to bound and prove. It can be shown that if the number of solutions in $\overline{\F_p}$ is finite, that is, if $I$ is zero-dimensional, then the homogenization only adds a finite number of additional solutions at infinity over the projective space \(\mathbb{P}^n(\overline{\F_p})\)~\cite{MW83}. Thus, the bound is sharp if and only if the number of solutions at infinity is zero (except the trivial solution at the origin).

\begin{proposition}[Quotient Ring Dimension of the CICO-($n$-$k$,$k$) Problem for \design{}]
    \label{prop:ideal-degree}
    Let \(I\) be the ideal generated by the polynomial system modelling the CICO-($n$-$k$,$k$) problem for \design{}. With high probability, the quotient ring dimension of \(I\) is exactly \(d^{kR}\), where \(R\) is the number of rounds and $k$ is the number of constrained outputs.
\end{proposition}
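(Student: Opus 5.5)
We would start from the final polynomial system built above: $Rk$ equations in $Rk$ variables, each of total degree $d$. By B\'ezout's Theorem the number of solutions over $\overline{\F_p}$, counted with multiplicity, is at most $d^{kR}$, which gives the upper bound on the quotient ring dimension immediately. The remaining work is to show that this bound is attained. By the discussion following B\'ezout's Theorem, this happens exactly when the homogenized system has no nontrivial solutions at infinity.

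To control the solutions at infinity, we would pass to the highest-degree homogeneous parts. Under the chosen variable weights and the elimination procedure, each retained equation $\widetilde{\mathcal{F}}^{(i)}_j$ with $n-k\le j\le n-1$ has its degree-$d$ part lying in $\F_p[\vars{x}^{(i-1)}_{n-k:n-1}]$. This part contains the pure power $(x^{(i-1)}_j)^d$ together with mixed degree-$d$ monomials in the same $k$ variables of round $i-1$. These come from substituting the degree-one expressions $g^{(i-1)}$ into $f_0,f_2$, and no variable of round $i$ appears in degree $d$.

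Setting the homogenizing variable to zero therefore gives a block-triangular system. For each round $i$ we get $k$ homogeneous forms of degree $d$ in the $k$ variables $\vars{x}^{(i-1)}_{n-k:n-1}$. A solution at infinity must be a common projective zero of all these forms. Hence it suffices to show that, for each round, the $k$ top forms $h^{(i)}_{n-k},\dots,h^{(i)}_{n-1}$ have only the trivial common zero in $\overline{\F_p}^k$. Equivalently, their resultant $\mathrm{Res}(h^{(i)}_{n-k},\dots,h^{(i)}_{n-1})$ must be nonzero. If this holds for every round, induction over the rounds forces every block of variables at infinity to vanish, so the whole point at infinity is trivial.

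The main obstacle is proving that these resultants are nonzero. The coefficients of the top forms are determined by the MDS matrix, the eliminated linear combinations and the fixed CICO inputs, so they are not free parameters.

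Our first attempt would exploit the shape $h^{(i)}_j=(x_j)^d+(\text{other monomials})$. If the mixed terms vanished, the resultant would equal $1$, the resultant of the diagonal system $x_j^d$. The multivariate resultant is a polynomial in the coefficients, and it is not identically zero because it is nonzero at this diagonal specialization. Modelling the coefficients (through the random fixed inputs, the round constants and the generic linear combinations) as behaving randomly, Schwartz--Zippel bounds the probability that the resultant vanishes by roughly $\deg(\mathrm{Res})/p$, which is small for our primes. A union bound over the $R$ rounds then yields the ``with high probability'' claim and exact equality $D=d^{kR}$.

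The genuinely delicate point is justifying that the coefficients are generic enough, since the MDS structure could in principle create degeneracies. We would back this with the toy-parameter Gr\"obner basis experiments mentioned after \cref{prop:quot-ring-dim-CICO}.
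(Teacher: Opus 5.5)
Your proposal follows the paper's proof. Both use the B{\'e}zout bound $d^{kR}$, establish tightness through the absence of solutions at infinity, and reduce the system at infinity to $R$ systems of $k$ degree-$d$ forms in the $k$ variables $\mathbf{x}^{(i-1)}_{n-k:n-1}$. The only difference is that the paper cites the fact that a dense homogeneous square system has only the trivial zero with high probability, while you unpack this as a resultant/Schwartz--Zippel argument.

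One correction to that unpacking: round constants, fixed CICO inputs and the target digest only enter lower-degree terms. The top forms are therefore determined by the MDS matrix, the initial layer, the leading terms of the $f_i$ and the constrained positions alone, so the randomness you invoke does not reach them. The argument rests, exactly as in the paper, on the genericity assumption you already flag.
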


\begin{proof}
As the leading monomials of the equations in the defined polynomial system have degree \(d\), the B{\'e}zout bound is given as \(d^{kR}\), where $R$ is the number of rounds. We show that the B{\'e}zout bound is tight by showing that the number of solutions at infinity is zero.

Because we are dealing with weighted orders, we need to consider a weighted homogenization procedure. Let's denote with \(w_i\) the weight of variable \(x_i\) and with \(d_i\) the total degree of polynomial \(f_i\). The weighted homogenization of \(f_i\) is defined as follows:
\[\widetilde{f}_i = x_0^{d_i} f_i\left(\frac{x_1}{x_0^{w_1}}, \dots, \frac{x_n}{x_0^{w_n}}\right),\] where \(x_0\) is the homogenization variable. The homogenized ideal is then defined as \(\widetilde{I} = \langle \widetilde{f}_1, \dots, \widetilde{f}_m \rangle\).

Counting the solutions at infinity is equivalent to counting the solutions of the homogenized system \(\widetilde{I}\) that satisfy \(x_0 = 0\). After replacing \(x_0\) with 0, we obtain, per round, $k$ homogeneous equations of the form \[
    \sum_{\substack{d_1+\dots+d_k=d}} \prod_{z=n-k}^{n-1} \left(x^{(i-1)}_z\right)^{d_z} = 0,
\] where coefficients are omitted for simplicity.

In other words, we obtain $R$ homogeneous systems of $k$ equations in $k$ variables, where the degree of each equation is \(d\). 
It is well-known in the literature~\cite{CoxLitOsh97} that, with high probability, the only solution of a dense homogeneous system of $k$ independent (otherwise it would not be a determined polynomial system) equations of degree $d$ in $k$ variables is the trivial solution at the origin. Hence, each of those systems has only the trivial solution at the origin.
As a result, the number of solutions at infinity is zero (as the trivial solution is not considered a solution at $\infty$) and the B{\'e}zout bound is tight.
\end{proof}

In addition, as the ideal is zero-dimensional and has no solutions at infinity, by~\cite[Theorem 8]{DBLP:conf/eurocrypt/CampaR25} and~\cite[Theorem 1.2]{DBLP:journals/moc/CoxD23} the ideal has Shape form, thus justifying a possible application of the FGLM algorithm to solve the system after GB computation. 
The results on the quotient ring dimension were supported by experimental evidence obtained by computing the GB of the polynomial system modelling the CICO-($n$-$k$,$k$) problem for toy versions of \design{}.

\subsection{Interpolation Attack and Proofs}
\label{appendix:proofs-algebraic-degree}

\begin{lemma}[Degree after each \beneswritten{} Block.]
    \label{lem:degree-after-benes-block}
    Consider the \design{} round function as in \cref{defn:round-function}. All the inputs are pair-wise passing through the \beneswritten{} (\cref{defn:design-non-linear-block-design}) nonlinear block. Let \(x,y\) be the inputs to a single \(\benes\) block and let \(d = \deg(f_0) = \deg(f_2) = \deg(f_3)\). Then, after applying one \beneswritten{} block, the degree of the outputs is \(\max(d \cdot \deg(x), d \cdot \deg(y))\) and \(\max(\deg(f_1) \cdot \deg(x), d \cdot \deg(y))\), respectively. 
\end{lemma}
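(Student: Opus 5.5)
The plan is a direct degree computation on the two coordinates of a single block. We view $x$ and $y$ as polynomials in the input variables, and use the fact that composing a univariate polynomial of degree $e$ with a polynomial of degree $\deg(x)$ gives degree exactly $e\cdot\deg(x)$. This holds because the leading homogeneous form of $x$, raised to the $e$-th power, is nonzero over a field, and lower-degree terms of the univariate polynomial contribute only lower degrees. Applying this to both instantiations of \cref{defn:design-non-linear-block-design}, we obtain $\deg f_0(x)=d\deg(x)$, $\deg f_2(y)=\deg f_3(y)=d\deg(y)$, and $\deg f_1(x)=\deg(f_1)\deg(x)$. Here $\deg(f_1)=3$ in both cases: it comes from $5x^3-5x$, or from $x^3+x$.

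Next, we bound each output coordinate. The first output is $u=f_0(x)+f_2(y)$ and the second is $v=f_1(x)+f_3(y)$. The degree of a sum is at most the maximum of the degrees of the summands. This gives $\deg(u)\le\max(d\deg(x),d\deg(y))$ and $\deg(v)\le\max(\deg(f_1)\deg(x),d\deg(y))$, which are the upper bounds in the statement.

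The main obstacle is equality, because the leading forms of the two summands could cancel. When the two maxima are attained by different summands with different degrees, no cancellation can occur and equality is immediate. For $v$ this is the generic situation whenever $\deg(f_1)\deg(x)\ne d\deg(y)$. The delicate case is $\deg(x)=\deg(y)$ for $u$, where the top forms are $\mathrm{lt}(x)^d+\mathrm{lt}(y)^d$ and could cancel, for instance if $\mathrm{lt}(y)=-\mathrm{lt}(x)$ and $d$ is odd.

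To handle this case, I would first check whether the inputs are fresh variables, as in the first round after the affine layer, where $x$ and $y$ are independent linear forms. There the top forms $x^d$ and $y^d$ are distinct monomials and cannot cancel. For later rounds, the equality should be stated as holding generically, that is, in the absence of such cancellation of leading forms. If needed, I would argue that the leading forms of $x$ and $y$ are not negatives of each other, since the MDS mixing with random round constants makes such a coincidence a degenerate event. This matches the ``degree grows as $d^R$'' usage in \cref{lem:degree-after-r-rounds}, and I would state that hypothesis explicitly rather than prove it unconditionally.
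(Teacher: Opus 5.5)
Your route is the paper's route. Its proof reads $\deg(u)$ and $\deg(v)$ directly off the definition of $\benes$: degrees multiply under composition, and the degree of a sum is the larger of the two. It asserts equality with no discussion of cancellation. Your upper bounds are correct, and your point that equality needs a non-cancellation hypothesis improves on the paper's one-line argument rather than falling short of it.

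The step that does not hold up is your justification that $\mathrm{lt}(y)=-\mathrm{lt}(x)$ is a degenerate event, made unlikely by MDS mixing and random round constants. Round constants are additive and never change leading forms. In the univariate setting of \cref{prop:interpolation-attack-complexity}, where all inputs but one are fixed, the cancellation is forced by the structure.

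To see this, take the unknown to be an odd-indexed input $a_{2j+1}=t$. The butterfly $\mathcal{L}$ turns pair $j$ into $(a_{2j}+t,\ a_{2j}-t)$. Since $M$ has entries in $\F_p$, $\mathcal{M}$ applies the same $M$ separately to the even and to the odd coordinates. So every block $i$ of the first round receives $x=M_{ij}t+\alpha_i$ and $y=-M_{ij}t+\beta_i$, with $M_{ij}\neq 0$ because $M$ is MDS. Because $d\in\{5,7\}$ is odd, the $t^d$-coefficient of $u=f_0(x)+f_2(y)$ is $M_{ij}^d+(-M_{ij})^d=0$. Hence $\deg(u)\le d-1$, and generically $\deg(u)=d-1$ with leading coefficient $d\,M_{ij}^{d-1}(\alpha_i+\beta_i)$, whereas the lemma claims $\deg(u)=d$.

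So the statement as written is false in a case directly relevant to the paper, and the paper's proof shares this gap. No genericity argument over the constants can rescue it. What you can prove is the following, and you should state the lemma with these explicit hypotheses:
\begin{itemize}
\item the upper bounds in the statement hold unconditionally;
\item equality holds for $u$ whenever $\deg(x)\neq\deg(y)$ or $\mathrm{lt}(x)^d+\mathrm{lt}(y)^d\neq 0$;
\item equality holds for $v$ whenever $\deg(f_1)\deg(x)\neq d\deg(y)$, or more generally whenever the two top forms do not cancel.
\end{itemize}

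In the multivariate first round the condition for $u$ does hold. The reason is that $\mathrm{lt}(x)$ and $\mathrm{lt}(y)$ are linearly independent linear forms in the inputs, so $\mathrm{lt}(x)^d+\mathrm{lt}(y)^d$ factors into nonzero linear forms. It is not that they are distinct monomials.

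In the example above the odd outputs keep degree $d$. So in round two $\deg(y)>\deg(x)$, and the outputs generically reach degree $d^2$ again. The $d^R$ growth of \cref{lem:degree-after-r-rounds} can therefore survive from the second round on, but the per-block equality fails in round one.
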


\begin{proof}
Consider the \design{} round function as in \cref{defn:round-function}. More specifically, let us recall the corresponding nonlinear layer: \[\nllayer(x_0, \dots, x_{2l-1}) = \concat_{i=0}^{l-1} \benes(x_{2i}, x_{2i+1}),\] where \(\benes\) is defined as in \cref{defn:design-non-linear-block-design} (\(\benes(x, y) = (u, v) = (f_0(x) + f_2(y), f_1(x) + f_3(y))\)). Let \(d\) be the degree of the nonlinear function. From \cref{defn:design-non-linear-block-design}, we can notice that \(\deg(u) = \max(d \cdot \deg(x), d \cdot \deg(y))\) and \(\deg(v) = \max(\deg(f_1) \cdot \deg(x), d \cdot \deg(y))\). 
\end{proof}

\begin{lemma}[Degree after the MDS Application.]
\label{lem:degree-after-mds}
    Let \(2l\) be the number of branches (inputs) of the \design{} round function and let \(M \in \F_p^{l \times l}\) be the MDS matrix. Let denote as \(x_0, \dots, x_{2l-1}\) the inputs to the MDS matrix (before interpreting them as elements in \(F_{p^2}\)) and \(y_0, \dots, y_{2l-1}\) the corresponding outputs (after interpreting them as elements in \(\F_p\)). 
    After applying the MDS matrix, \(\deg(y_{2i}) = \max(x_0, x_2, \dots, x_{2l-2})\) and \(\deg(y_{2i+1}) = \max(x_1, x_3, \dots, x_{2l-1})\) for \(1 \le i \le l-1\).
\end{lemma}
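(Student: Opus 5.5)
The plan is to unfold the definition of \(\llayer\) so that each output coordinate becomes an explicit \(\F_p\)-linear combination of the inputs, and then read off the degree. Write the inputs as \(X_j = \Psi(x_{2j}, x_{2j+1}) = x_{2j} + \theta x_{2j+1} \in \F_{p^2}\) for \(0 \le j \le l-1\). The key fact from \cref{section:designrationale} is that all entries \(m_{ij}\) of \(M\) lie in \(\F_p\). Hence
\[
(M X)_i = \sum_{j} m_{ij} x_{2j} + \theta \sum_{j} m_{ij} x_{2j+1}.
\]
Applying \(\mathcal{E}^{-1}\) therefore gives \(y_{2i} = \sum_j m_{ij} x_{2j}\) and \(y_{2i+1} = \sum_j m_{ij} x_{2j+1}\). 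This is exactly the observation already made in the text, that \(M\) acts separately on the even and odd coordinate vectors. It reduces the lemma to a statement about a single \(\F_p\)-linear combination of polynomials.

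For the upper bound I will use that the degree of a sum is at most the maximum of the degrees. This gives \(\deg(y_{2i}) \le \max_j \deg(x_{2j})\), and the same for the odd coordinates; the statement's \(\max(x_0,\dots)\) is read as the maximum of the degrees. For the reverse inequality I will use the MDS property: every \(1\times 1\) minor of \(M\) is nonzero, so every \(m_{ij} \neq 0\). Consequently every even input actually appears in every even output with a nonzero coefficient, and likewise for the odd ones.

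The main obstacle is equality. Two inputs attaining the maximal degree could have their top-degree homogeneous parts cancel in the combination \(\sum_j m_{ij} x_{2j}\).
\begin{itemize}
\item I will first argue that the \(l\) rows of \(M\) cannot all produce such a cancellation simultaneously. \(M\) is invertible, so applying it to the vector of top-degree parts, which is nonzero, yields a nonzero vector.
\item Hence at least one output attains the maximal degree.
\item To get equality for every \(i\), I will fall back on a genericity assumption. The top-degree parts coming out of distinct \beneswritten{} blocks (by \cref{lem:degree-after-benes-block}, leading terms \(x^d\) or \(y^d\) in distinct input variables, after the first round) are treated as linearly independent, so no nonzero combination of them vanishes.
\end{itemize}
Since all \(m_{ij} \neq 0\), each \(y_{2i}\) then contains every top-degree term and has degree exactly the maximum. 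This is the same ``with high probability / no cancellation'' reasoning used elsewhere in the paper, and I would state it explicitly as the hypothesis under which the equality holds.
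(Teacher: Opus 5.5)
\paragraph{Same route as the paper, with the missing step supplied.} Your proposal follows the paper's skeleton. Because $M$ has entries in $\F_p$, it acts separately on the even and odd coordinates, so $y_{2i}=\sum_j m_{ij}x_{2j}$ and $y_{2i+1}=\sum_j m_{ij}x_{2j+1}$. The difference is the last step.

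The paper simply asserts that ``a linear combination does not alter the degree''. That only justifies $\deg(y_{2i})\le\max_j\deg(x_{2j})$. For arbitrary inputs, equality is false: take $x_0=m_{i1}z^e$, $x_2=-m_{i0}z^e$, and all other even inputs of lower degree; then $\deg(y_{2i})<e$. The MDS property (all $m_{ij}\neq 0$) does not prevent this.

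So your explicit non-cancellation hypothesis is not a weakness of your route; it is a correction the statement needs. Your invertibility remark also gives something the paper lacks: unconditionally, $\max_i\deg(y_{2i})=\max_j\deg(x_{2j})$, and likewise for the odd half.

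\paragraph{The hypothesis you chose is too strong.} Linear independence of the top-degree parts of different blocks is more than you need. Your justification (leading terms $x^d$, $y^d$ ``in distinct input variables'') also does not fit the composed map. After $\mathcal{M}\circ\mathcal{L}$, every \beneswritten{} input is a linear form in the same original variables. In the univariate setting used for the interpolation bound, every top-degree part is a scalar multiple of one power $z^D$. So as soon as two blocks reach the maximal degree, your hypothesis fails.

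\paragraph{Cancellation really occurs in this design.} If the free input is $a_1$, then $\mathcal{L}$ produces $(a_0+z,\,a_0-z)$. Each round-one block therefore receives inputs with leading coefficients $m_{i0}$ and $-m_{i0}$. Since $d$ is odd, the $x^d$ and $y^d$ terms cancel in its first output. So citing the \beneswritten{}-block lemma for the shape of the leading terms is not safe either.

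\paragraph{What to state instead.} What you actually need is the row-wise condition $\sum_j m_{ij}h_j\neq 0$ for each $i$, where $h_j$ is the degree-$D$ homogeneous part of $x_{2j}$ (and the analogue for odd indices). Adding round constants never changes the top-degree part of a nonconstant coordinate. Hence this condition is a deterministic property of the linear layers and the leading terms of the $f_i$. It should be checked by propagating leading coefficients, not assumed to hold ``with high probability''.
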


\begin{proof}
   Let \(M \in \F_p^{l \times l}\) be the MDS matrix and let \(x_0, \dots, x_{2l-1}\) be the inputs to the MDS matrix (before interpreting them as elements in \(F_{p^2}\)) and \(y_0, \dots, y_{2l-1}\) the corresponding outputs (after interpreting them as elements in \(\F_p\)). 
    Recall that the MDS matrix application is applied after interpreting the $\benes$ outputs as elements in \(\F_{p^2}\), namely \(y_{2i}X + y_{2i+1}\) for \(1 \le i \le l-1\), where $X$ denotes the root of the irreducible polynomial defining the extension. 
    As the MDS matrix is over \(\F_p\), this application is equivalent to applying the MDS matrix on the variables \(x_0, x_2, \dots, x_{2l-2}\) and on the variables \(x_1, x_3, \dots, x_{2l-1}\) separately. Hence, after the MDS application, the even indexed outputs are linear combinations of the variables \(y_0, y_2, \dots, y_{2l-2}\), while the odd-indexed variables are a linear combination of the variables \(y_1, y_3, \dots, y_{2l-1}\). Because a linear combination does not alter the degree of the input variables, \(\deg(y_{2i}) = \max(x_0, x_2, \dots, x_{2l-2})\) and \(\deg(y_{2i+1}) = \max(x_1, x_3, \dots, x_{2l-1})\) for \(1 \le i \le l-1\). 
\end{proof}

\begin{proof}[\cref{lem:degree-after-r-rounds}]
Let \(2l\) be the number of branches (inputs) of the \design{} round function and let \(d = \deg(f_0) = \deg(f_2) = \deg(f_3)\) and \(\beta = \deg(f_1)\).
    After the initial linear layer application (as defined in \Cref{defn:design}), each branch is a linear combination of the input variables. Hence, the degree of each branch is \(1\).

    For the sake of simplicity, let denote as \(x_0, \dots, x_{2l-1}\) the inputs to the first \benes\ function (the nonlinear layer), as \(y_0, \dots, y_{2l-1}\) the corresponding outputs and as \(z_0, \dots, z_{2l-1}\) the outputs of the subsequent MDS application. We use \(\deg(\vars{x})\) to denote the maximum degree of the variables \(x_0, \dots, x_{2l-1}\).

    After applying the nonlinear layer, from \cref{lem:degree-after-benes-block} we know that \(\deg(y_{2i}) = \max(d \cdot \deg(x_{2i}), d \cdot \deg(x_{2i+1}))\) and \(\deg(y_{2i+1}) = \max(\deg(f_1) \cdot \deg(x_{2i}), d \cdot \deg(x_{2i+1}))\) for \(1 \le i \le l-1\). But, as \(\deg(x_j) = \deg(x_k)\) for all \(1 \le j < k \le 2l-1\), \(\deg(y_{2i}) = \deg(y_{2i+1}) = d \cdot \deg(\vars{x})\) for all \(1 \le i \le l-1\). As a result, because in the first round \(\deg(\vars{x}) = 1\), \(\deg(y_{2i}) = \deg(y_{2i+1}) = d\). 
    Following \Cref{lem:degree-after-mds}, we know that the degrees of the inputs are not affected, hence, after the first round (comprehending also the MDS application), \(\deg(x_j) = d\) for all \(1 \le j \le 2l-1\).

    Repeating the same reasoning, as the degree of each branch is the same at the beginning of each round, we can conclude that after \(R\) rounds, the degree of each output is given as \(d^{R}\).
\end{proof}

\section{MDS Matrix Implementation}
\label{appendix:mds-implementation}

\paragraph{Circulants as Polynomial Products.}
For the algebraic explanation, it is convenient to use a first-column convention. A circulant matrix is determined by a sequence \(c = (c_0, c_1, \ldots, c_{n-1}) \in \F_p^n\); set
\[
    M_{i,j} \;=\; c_{(i - j) \bmod n}.
\]
The implementation stores first rows instead. This is the transpose convention, equivalently replacing \(X\) by \(X^{-1}\) in the polynomial description; transposition preserves the MDS property and does not change the adder cost.

Identify the state space \(\F_p^n\) with
\[
    R_n \;:=\; \F_p[X] / (X^n - 1)
\]
through the linear map \(\varphi\) sending \((a_0,\ldots,a_{n-1})\) to \(\sum_i a_i X^i\). If \(f(X)\) represents the circulant column \(c\) and \(g(X)\) represents the input vector, then
\[
    \varphi(M \cdot x) = f(X) \cdot g(X) \pmod{X^n - 1}.
\]
Thus a circulant matrix--vector product is exactly a cyclic convolution.

\paragraph{CRT Decomposition.}
Over \(\Z\), and over \(\F_p\) for primes \(p \nmid n\), the polynomial \(X^n - 1\) factorises into distinct cyclotomic factors,
\[
    X^n - 1 \;=\; \prod_{d \mid n} \Phi_d(X).
\]
Because these factors are pairwise coprime, the Chinese Remainder Theorem gives
\[
    R_n \;\xrightarrow{\;\sim\;}\; \bigoplus_{d \mid n} \F_p[X] / \Phi_d(X), \qquad
    f \;\longmapsto\; (f \bmod \Phi_d)_{d \mid n}.
\]
Multiplication in \(R_n\) is therefore reducible to products in lower-degree quotient rings. In a field NTT this would be done by evaluating at roots of unity in \(\F_p\). For the hardware path here, the length-\(4\) stage is a real FFT: the \(u=1\) and \(u=-1\) bins use only signs, and the conjugate \(u^2=-1\) bins are kept as two coordinates representing elements of \(\F_p[U]/(U^2+1)\). This keeps the transform itself in shifts, sign changes, and additions.

The decomposition is classical. In FFT terminology it is related to the Good--Thomas, or Prime-Factor, Algorithm; in cyclic-convolution terminology it is the CRT-based multidimensional convolution construction of Agarwal and Cooley~\cite{AgarwalCooley1977}. Bernstein describes the same ring-homomorphism viewpoint as Good's trick~\cite{BernsteinM3}, and algebraic signal processing systematizes this style of derivation as polynomial algebras decomposed by the CRT~\cite{PuschelMoura2008}. In the AO-hash setting, this integer-FFT MDS technique was introduced by Jacqueline Nabaglo in Plonky2, and later appeared in RPO and TIP5~\cite{plonky2,DBLP:journals/iacr/AshurKMST22,DBLP:journals/iacr/SzepieniecLST23}.

\paragraph{The \(n = 12\) Schedule.}
For \(n = 12\), set \(U=X^3\). Since \(U^4=1\), every state polynomial can be written as
\[
    g(X) =
    g_0(U) + Xg_1(U) + X^2 g_2(U),
    \qquad
    g_b(U) = \sum_{a=0}^{3} g_{3a+b} U^a .
\]
This is exactly the stride-\(3\) layout used by the implementation:
\[
    [s_0, s_3, s_6, s_9],\quad [s_1, s_4, s_7, s_{10}],\quad [s_2, s_5, s_8, s_{11}],
\]
with three length-\(4\) real FFTs in total, one for each polynomial \(g_b(U)\).

After the FFT, fix a bin \(U=u\). The residual variable \(Y\), the image of \(X\) in that bin, satisfies \(Y^3=u\). Multiplication by the circulant row is therefore a length-\(3\) product in \(\F_p[Y]/(Y^3-u)\):
\begin{itemize}
    \item \emph{DC bin (\(u=1\)).} The quotient is \(\F_p[Y]/(Y^3-1)\), so the block is a cyclic length-\(3\) convolution. This is \texttt{block1\_3}.
    \item \emph{Nyquist bin (\(u=-1\)).} The quotient is \(\F_p[Y]/(Y^3+1)\), so the block is a negacyclic length-\(3\) convolution. This is \texttt{block3\_3}.
    \item \emph{Complex bin (\(u^2=-1\)).} The quotient is represented over \(\F_p[U]/(U^2+1)\) with the relation \(Y^3=u\). In the implementation's FFT-\(4\) sign convention, this twisted length-\(3\) product is \texttt{block2\_3}.
\end{itemize}
The inverse length-\(4\) real FFT interpolates the three transformed \(g_b(U)\) polynomials back to coefficients in \(U\), producing the output in the original state order.

\paragraph{The \(n = 8\) Schedule.}
For \(n = 8\), set \(U=X^2\). Then \(U^4=1\) and
\[
    g(X) =
    g_0(U) + Xg_1(U),
    \qquad
    g_b(U) = \sum_{a=0}^{3} g_{2a+b} U^a .
\]
The implementation therefore splits the input into even and odd positions and applies one real FFT-\(4\) to each group, for two length-\(4\) real FFTs in total. For each FFT-\(4\) frequency type, the even-bin value and the odd-bin value form one length-\(2\) vector. Fixing \(U=u\), the remaining variable \(Y\), again the image of \(X\), satisfies \(Y^2=u\):
\begin{itemize}
    \item \emph{DC bin (\(u=1\)).} The quotient is \(\F_p[Y]/(Y^2-1)\), implemented by \texttt{block1\_2}.
    \item \emph{Nyquist bin (\(u=-1\)).} The quotient is \(\F_p[Y]/(Y^2+1)\), implemented by \texttt{block3\_2}.
    \item \emph{Complex bin (\(u^2=-1\)).} The quotient is represented over \(\F_p[U]/(U^2+1)\) with \(Y^2=u\), implemented by \texttt{block2\_2}.
\end{itemize}
The output is recovered with two inverse length-\(4\) real FFTs. As in the \(n=12\) case, the FFT/IFFT structure is fixed by the dimension; only the small block constants are matrix-specific.

\paragraph{FFT-Block Search Parameterisation.}
The selection procedure in \cref{section:designrationale} samples candidates for \(n \in \{8,12\}\) directly in the frequency domain rather than in time. Each candidate is specified by the small integer constants of the three FFT-\(4\) frequency blocks --- the \(u=1\) (cyclic), \(u^2=-1\) (twisted, over \(\F_p[U]/(U^2+1)\)), and \(u=-1\) (negacyclic) blocks described above --- of length \(3\) for \(n=12\) and length \(2\) for \(n=8\). The candidate time-domain row is then reconstructed by applying the inverse FFT-\(4\) along the length-\(4\) axis and converting the resulting first column to the implementation's first-row convention. This is the parameterisation referred to in the selection procedure.

\section{Additional Software Performance Results}
\label{appendix:performance}

We implement \design{} in the \enquote{ZK-friendly Hash Zoo}, a benchmark framework for ZK-friendly hash functions used in previous literature~\cite{DBLP:journals/iacr/BarbaraGKLRSW21,DBLP:journals/tosc/GrassiKLRSW24,DBLP:journals/tches/BouvierGKKRSS25}.\footnote{\url{https://extgit.isec.tugraz.at/krypto/zkfriendlyhashzoo/}} It provides many hash functions with different underlying primes, although for small primes only Goldilocks is supported across a variety of them.

\begin{table}[ht]
  \centering
  \caption{Software performance of compressing $\approx 512$ bits, as implemented in the ZK-friendly Hash Zoo in the Goldilocks prime field. The reported results are in nanoseconds and averaged over 10000 runs.
  State sizes are $t=8$ and $t=12$ as typical. Tip5 is only supported with $t=16$.}\label{tab:comparison_software_zk_zoo}
  \begin{multicols}{2}
      \begin{tabular}{c|cc}
        \toprule
        Hash            & \multicolumn{2}{c}{Runtime (ns)} \\\cmidrule(lr){2-3}
                        & (t = 8)     & (t = 12)           \\\midrule
        \rescueprime{} \cite{DBLP:journals/tosc/AlyABDS20,DBLP:journals/iacr/SzepieniecAD20}  & 12\,276     & 19\,231            \\
        Neptune         & 2206        & 3357               \\
        Griffin         & 1818        & 1996               \\
        GMiMC           & 2371        & 4993               \\
        \poseidon{}     & 1902        & 3257 \\
        \bottomrule
      \end{tabular}
    \columnbreak

    \begin{tabular}{c|cc}
      \toprule
      Hash            & \multicolumn{2}{c}{Runtime (ns)} \\\cmidrule(lr){2-3}
                      & (t = 8)     & (t = 12)           \\\midrule
      \cellcolor{gray!25}\design{}       & \cellcolor{gray!25}1199         & \cellcolor{gray!25}1551                \\
      \poseidontwo{}       & 949         & 1266               \\
      Tip5            & \multicolumn{2}{c}{466*}         \\
      Tip4'           & -           & 259                \\
      \monolith{}        & 131         & 247                \\
      \bottomrule
    \end{tabular}
    \end{multicols}
\end{table}

\cref{tab:comparison_software_zk_zoo} shows that \design{} is slower than \poseidontwo{} at both state sizes: it takes $1199$ ns versus $949$ ns for $t=8$, and $1551$ ns versus $1266$ ns for $t=12$, corresponding to approximately $26\%$ and $23\%$ higher runtimes, respectively. Among the designs using only low-degree nonlinear functions, \poseidontwo{} is the fastest in this comparison, while \design{} remains faster than \poseidon{} and GMiMC. For the \poseidon{} implementation, the authors use Cauchy MDS matrices, where $M_{i,j}=\frac{1}{x_i+y_j}$. However, the library does not use any specific optimizations but rather evaluates it as an unstructured dense matrix. For the internal layers, the framework does use the efficient representation proposed by the \poseidon{} paper~\cite[Appendix~B]{DBLP:conf/uss/0001KR0S21}.

The other design strategies exhibit familiar trade-offs. Hash functions relying on the power maps $x^{d}$ and $x^{1/d}$ incur a substantial plain evaluation cost for the inverse power, as seen for \rescueprime{} and Griffin. Hash functions based on the split-and-lookup paradigm, with nonlinear layers built from non-algebraic bit transformations, achieve the lowest plain evaluation times in this comparison, but their efficient arithmetizations rely on lookup support that may not be available in all proof systems.

\end{document}

\typeout{get arXiv to do 4 passes: Label(s) may have changed. Rerun}